\newcommand{\Ps}{\mathcal{P}}
\newcommand{\Psc}{\Ps_c}
\newcommand{\D}{\mathcal{D}}
\newcommand{\R}{\mathbb{R}}
\newcommand{\N}{\mathbb{N}}
\newcommand{\eword}{\varepsilon}
\newcommand{\minf}{\mathsf{min}}
\newcommand{\maxf}{\mathsf{max}}
\newcommand{\lang}{\mathcal{L}}
\newcommand{\supp}{\mathsf{supp}}
\newcommand{\Conv}{\mathsf{Conv}}
\newcommand*{\circled}[1]{{\tikz[baseline=(X.base)]\node(X)[draw,shape=circle,inner sep=0]{\text{\scriptsize\strut$#1$}};}}
\newenvironment{automaton}{\begin{tikzpicture}[node distance=1.5cm,on grid,auto,initial text={},state/.append style={inner sep=0pt,outer sep=0pt,minimum size=3.5ex}]}{\end{tikzpicture}}
\title{Convex Language Semantics for \\ Nondeterministic Probabilistic
Automata\thanks{%
  This work was partially supported by ERC starting grant ProFoundNet (679127),
  ERC consolidator grant AVS-ISS (648701), a Leverhulme Prize (PLP-2016-129),
  and an NSF grant (1637532).}}
\author{Gerco van Heerdt\inst{1} \and Justin Hsu\inst{2} \and Jo\"el Ouaknine\inst{3} \and Alexandra Silva\inst{1}}
\institute{University College London \and Cornell University \and MPI-SWS and Oxford University}
\begin{document}

\maketitle

\begin{abstract}
  We explore language semantics for automata combining probabilistic and
  nondeterministic behavior.  We first show that there are precisely two natural
  semantics for probabilistic automata with nondeterminism.  For both choices,
  we show that these automata are strictly more expressive than deterministic
  probabilistic automata, and we prove that the problem of checking language
  equivalence is undecidable by reduction from the threshold problem.  However,
  we provide a discounted metric that can be computed to arbitrarily high
  precision.
\end{abstract}

\pagestyle{plain}

\section{Introduction}

Probabilistic automata are fundamental models of randomized computation. They
have been used in the study of such topics as the semantics and correctness of probabilistic
programming languages~\cite{DBLP:conf/focs/Kozen79,DBLP:conf/tacas/LegayMOW08}, randomized
algorithms~\cite{DBLP:conf/tacas/HintonKNP06,DBLP:conf/cai/Baier13,Swaminathan2012}, and machine learning~\cite{DBLP:journals/ml/BalleCG14,Ron1996}.
Removing randomness but adding nondeterminism, nondeterministic automata are
established tools for describing concurrent and distributed
systems~\cite{DBLP:journals/tcs/SassoneNW96}.

Interest in systems that exhibit
both random and nondeterministic behavior goes back to Rabin's randomized
techniques to increase the efficiency of distributed algorithms in the
1970s and 1980s~\cite{rabin76,DBLP:journals/jcss/Rabin82}.  This line of
research yielded several automata models supporting both nondeterministic and
probabilistic choice~\cite{segala-thesis,bernardo,DBLP:conf/fmco/HermannsK09}.
Many formal techniques and tools were developed for these models, and they have
been successfully used in verification
tasks~\cite{DBLP:journals/ife/Henzinger13,DBLP:conf/cav/KwiatkowskaNP11,DBLP:conf/fmco/HermannsK09},
but there are many ways of combining nondeterminism and randomization, and there
remains plenty of room for further investigation.

In this paper we study nondeterministic probabilistic automata (NPAs) and
propose a novel probabilistic language semantics. NPAs are similar to Segala
systems~\cite{segala-thesis} in that transitions can make combined
nondeterministic and probabilistic choices, but NPAs also have an output
weight in $[0, 1]$ for each state, reminiscent of
observations in Markov Decision Processes. This enables us to define the
expected weight associated with a word in a similar way to what one would do for
standard nondeterministic automata---the output of an NPA on an input word can
be computed in a \emph{deterministic} version of the automaton, using a careful
choice of algebraic structure for the state space.

Equivalence in our semantics is language equivalence (also known as trace
equivalence), which is coarser than probabilistic
bisimulation~\cite{valeria,DBLP:conf/concur/Bonchi0S17,yuxin,DBLP:conf/concur/HermannsKK14},
which distinguishes systems with different branching structure even if the total
weight assigned to a word is the same. This generalizes the classical difference
between branching and linear semantics~\cite{moshe} to the probabilistic
setting, with different target applications calling for different semantics.

After reviewing mathematical preliminaries in Section~\ref{sec:prelim}, we
introduce the NPA model and explore its semantics in Section~\ref{sec:npa}.  We
show that there are precisely two natural ways to define the language semantics
of such systems---by either taking the maximum or the minimum of the weights
associated with the different paths labeled by an input word.  The proof of this
fact relies on an abstract view on these automata generating probabilistic
languages with algebraic structure. Specifically, probabilistic languages have
the structure of a \emph{convex algebra}, analogous to the join-semilattice
structure of standard languages. These features can abstractly be seen as
so-called \emph{Eilenberg-Moore algebras} for a monad---the distribution and the
powerset monads, respectively---which can support new semantics and proof
techniques (see, e.g,~\cite{DBLP:conf/concur/Bonchi0S17,bonchi-pous}). 

Then, we compare NPAs with standard, deterministic probabilistic automata
(DPAs) (sometimes called \emph{reactive systems} in the literature) in
Section~\ref{sec:expressive}. Our semantics ensures that NPAs recover DPAs in
the special case when there is no nondeterministic choice. More interestingly,
we show that there are weighted languages accepted by NPAs that are not
accepted by any DPA. We use the theory of linear recurrence sequences to give a
separation even for weighted languages over a unary alphabet.

In Section~\ref{sec:equiv}, we turn to equivalence.  We prove that language
equivalence of NPAs is undecidable by reduction from so-called \emph{threshold
problems}, which were shown to be undecidable by Blondel and
Canterini~\cite{blondel2003}. The hard instances encoding the threshold problem are
equivalences between probabilistic automata over a two-letter alphabet. Thus,
the theorem immediately implies that equivalence of NPAs is undecidable when the
alphabet size is at least two. The situation for automata over unary
alphabets is more subtle; in particular, the threshold problem over a unary
alphabet is not known to be undecidable. However, we give a reduction from the
Positivity problem on linear recurrence sequences, a problem where a decision
procedure would necessarily entail breakthroughs in open problems in number
theory~\cite{ouaknine2014positivity}. Finally, we show that despite the
undecidability result we can provide a discounted metric that can be computed to
arbitrarily high precision.

We survey related work and conclude in Section~\ref{sec:conclusion}.

\section{Preliminaries} \label{sec:prelim}

Before we present our main technical results, we review some necessary
mathematical background on convex algebra, monads, probabilistic automata, and language semantics.

\subsection{Convex Algebra}

A set $A$ is a \emph{convex algebra}, or a \emph{convex set}, if for all $n \in \mathbb{N}$ and tuples
$(p_i)_{i = 1}^n$ of numbers in $[0, 1]$ summing up to $1$ there is an
operation denoted $\sum_{i = 1}^n p_i (-)_i \colon A^n \to A$ satisfying the
following properties for $(a_1, \dots, a_n) \in A^n$:
\begin{description}
  \item[Projection.]
    If $p_j = 1$ (and hence $p_i = 0$ for all $i \neq j$), we have $\sum_{i =
    1}^n p_i a_i = a_j$.
  \item[Barycenter.]
    For any $n$ tuples $(q_{i, j})_{j = 1}^m$ in $[0, 1]$ summing up to $1$, we
    have
    \[
      \sum_{i = 1}^n p_i \left( \sum_{j = 1}^m q_{i, j} a_j \right)
      =
      \sum_{j = 1}^m \left( \sum_{i = 1}^n p_i q_{i, j} \right) a_j .
    \]
\end{description}
Informally, a convex algebra structure gives a way to take finite convex
combinations of elements in a set $A$. Given this structure, we can define
convex subsets and generate them by elements of $A$.
\begin{definition}
  A subset $S \subseteq A$ is \emph{convex} if it is closed under all convex
  combinations. (Such a set can also be seen as a convex subalgebra.) A convex
  set $S$ is \emph{generated} by a set $G \subseteq A$ if for all $s \in S$,
	there exist $n \in \mathbb{N}$, $(p_i)_{i = 1}^n$, $(g_i)_{i = 1}^n \in G^n$ such that $s =
  \sum_i p_i g_i$. When $G$ is finite, we say that $S$ is \emph{finitely
  generated}.
\end{definition}
We can also define morphisms between convex sets.
\begin{definition}
  An \emph{affine map} between two convex sets $A$ and $B$
  is a function $h \colon A \to B$ commuting with convex combinations:
  \[
    h \left( \sum_{i = 1}^n p_i a_i \right)
    = \sum_{i = 1}^n p_i h(a_i) .
  \]
\end{definition}

\subsection{Monads and their Algebras}

Our definition of language semantics will be based on the category theoretic framework of monads and their algebras.
Monads can be used to model computational side-effects such as nondeterminism and probabilistic choice.
An algebra allows us to interpret such side-effects within an object of the category.

\begin{definition}
  A \emph{monad} $(T, \eta, \mu)$ consists of an endofunctor $T$ and two natural
  transformations: a \emph{unit} $\eta \colon \mathit{Id} \Rightarrow T$ and a
  \emph{multiplication} $\mu \colon TT \Rightarrow T$, making the following
  diagrams commute.
	\begin{align*}
		\begin{tikzcd}[ampersand replacement=\&]
			T \ar{r}{\eta} \ar[equal]{rd}{} \ar{d}[swap]{T\eta} \&
				TT \ar{d}{\mu} \\
			TT \ar{r}{\mu} \&
				T
		\end{tikzcd} &
			&
			\begin{tikzcd}[ampersand replacement=\&]
				TTT \ar{r}{T\mu} \ar{d}[swap]{\mu} \&
					TT \ar{d}{\mu} \\
				TT \ar{r}{\mu} \&
					T
			\end{tikzcd}
	\end{align*}
\end{definition}
When there is no risk of confusion, we identify a monad with its endofunctor.
An example of a monad in the category of sets is the triple $(\Ps, \{-\},
\bigcup)$, where $\Ps$ denotes the finite powerset functor sending each set to
the set of its finite subsets, $\{-\}$ is the singleton operation, and $\bigcup$
is set union.

\begin{definition}
  An \emph{algebra for a monad} $(T, \eta, \mu)$ is a pair $(X, h)$ consisting
  of a carrier set $X$ and a function $h \colon TX \to X$ making the following
  diagrams commute.
	\begin{align*}
		\begin{tikzcd}[ampersand replacement=\&]
			X \ar{r}{\eta} \ar[equal]{rd}{} \&
				TX \ar{d}{h} \\
			\&
				X
		\end{tikzcd} &
			&
			\begin{tikzcd}[ampersand replacement=\&]
				TTX \ar{r}{Th} \ar{d}[swap]{\mu} \&
					TX \ar{d}{h} \\
				TX \ar{r}{h} \&
					X
			\end{tikzcd}
	\end{align*}
\end{definition}

\begin{definition}
	A homomorphism from an algebra $(X, h)$ to an algebra $(Y, k)$ for a monad $T$ is a function $f \colon X \to Y$ making the diagram below commute.
	\[
		\begin{tikzcd}
			TX \ar{r}{Tf} \ar{d}[swap]{h} &
				TY \ar{d}{k} \\
			X \ar{r}{f} &
				Y
		\end{tikzcd}
	\]
\end{definition}

The algebras for the finite powerset monad are precisely the join-semilattices with bottom, and their homomorphisms are maps that preserve finite joins.
The algebras for any monad together with their homomorphisms form a category.

\subsection{Distribution and Convex Powerset Monads}

We will work with two monads closely associated with convex sets.  In the
category of sets, the \emph{distribution monad} $(\D, \delta, m)$ maps a set $X$
to the set of distributions over $X$ with finite support. The unit $\delta
\colon X \to \D{X}$ maps $x \in X$ to the point distribution at $x$. For the
multiplication $m \colon \D\D{X} \to \D{X}$, let $d \in \D\D{X}$ be a finite
distribution with support $\{ d_1, \dots, d_n \} \subseteq \D{X}$ and define
$m(d) = \sum_{i = 1}^n p_i d_i$, where $p_i$ is the probability of producing
$d_i$ under $d$. The category of algebras for the distribution monad is
precisely the category of convex sets and affine maps---we will often convert
between these two representations implicitly.

In the category of convex sets, the \emph{finitely generated nonempty convex
powerset monad}~\cite{DBLP:conf/concur/Bonchi0S17} $(\Psc, \{ - \}, \bigcup)$
maps a convex set $A$ to the set of finitely generated nonempty convex subsets
of $A$.\footnote{%
	The monad defined in the paper cited does not have the restriction of containing only convex subsets that are finitely generated.
	However, since all the monad operations preserve this property, the restricted
  monad is also well-defined.}
The convex algebra structure on $\Psc{A}$ is given by $\sum_{i = 1}^n p_iU_i =
\{\sum_{i = 1}^n p_i u_i \mid u_i \in U_i\text{ for all $1 \le i \le n$}\}$ with
every $U_i \in \Psc{A}$.  The unit map $\{ - \} \colon A \to \Psc{A}$ maps $a
\in A$ to a singleton convex set $\{ a \}$, and the multiplication $\bigcup
\colon \Psc\Psc{A} \to \Psc{A}$ is again the union operation, which collapses
nested convex sets.

As an example, we can consider this monad on the convex algebra $[0, 1]$.
The result is a finitely generated convex set.
\begin{lemma}\label{lem:gen}
	The convex set $\Psc[0, 1]$ is generated by its elements $\{0\}$, $\{1\}$, and $[0, 1]$, i.e., $\Conv(\{\{0\}, \{1\}, [0, 1]\}) = \Psc[0, 1]$.
\end{lemma}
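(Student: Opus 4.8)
The plan is to prove the two inclusions of the claimed equality separately, putting most of the work into a concrete description of the elements of $\Psc[0,1]$. The inclusion $\Conv(\{\{0\},\{1\},[0,1]\}) \subseteq \Psc[0,1]$ is immediate: the three generators all belong to $\Psc[0,1]$, and $\Psc[0,1]$ is itself a convex set carrying the convex algebra structure of the monad, so it is closed under exactly the convex combinations that generate $\Conv(\{\{0\},\{1\},[0,1]\})$.

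For the reverse inclusion, I would first observe that the finitely generated nonempty convex subsets of $[0,1]$ are precisely the closed intervals $[a,b]$ with $0 \le a \le b \le 1$. Indeed, if $S$ is generated by a finite set $\{g_1,\dots,g_n\} \subseteq [0,1]$, then every element of $S$ is a convex combination $\sum_i p_i g_i$, which lies between $\min_i g_i$ and $\max_i g_i$; conversely every point of $[\min_i g_i, \max_i g_i]$ arises as such a combination, so $S = [\min_i g_i, \max_i g_i]$. Since every closed interval $[a,b] \subseteq [0,1]$ is nonempty, convex, and finitely generated (by $\{a,b\}$), this yields $\Psc[0,1] = \{[a,b] \mid 0 \le a \le b \le 1\}$.

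The heart of the argument is then a single computation in the convex algebra $\Psc[0,1]$. Using the explicit formula $\sum_i p_i U_i = \{\sum_i p_i u_i \mid u_i \in U_i\}$ for the monad's algebra structure, I would evaluate the convex combination of the three generators with coefficients $p,q,r$ summing to $1$:
\[
  p \cdot \{0\} + q \cdot \{1\} + r \cdot [0,1]
  = \{\, q + r t \mid t \in [0,1] \,\}
  = [\,q,\; q + r\,].
\]
Given an arbitrary target interval $[a,b] \subseteq [0,1]$, the choice $q = a$, $r = b - a$, and $p = 1 - b$ gives nonnegative coefficients summing to $1$ and produces exactly $[a,b]$. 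Combined with the characterization above, this shows that every element of $\Psc[0,1]$ lies in $\Conv(\{\{0\},\{1\},[0,1]\})$, finishing the proof.

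The only real subtlety, and the point I would treat most carefully, is the first step: one must verify that \emph{finitely generated convex subset of} $[0,1]$ genuinely collapses to \emph{closed interval}, which hinges on the one-dimensionality of $[0,1]$ (convex hulls of finite point sets on the line are closed intervals bounded by the extreme generators). Once that concrete description is secured, the remaining steps are a short and routine calculation.
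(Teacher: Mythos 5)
Your proof is correct and follows essentially the same route as the paper: identify the elements of $\Psc[0,1]$ with closed intervals $[a,b]$ and exhibit each as an explicit convex combination of $\{0\}$, $\{1\}$, and $[0,1]$ (your choice of coefficients $q=a$, $r=b-a$, $p=1-b$ is the paper's identity $[p,q]=p\{1\}+(q-p)[0,1]+(1-q)\{0\}$ up to renaming). The only difference is that you spell out the interval characterization and the easy inclusion, which the paper leaves implicit.
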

\begin{proof}
  The finitely generated nonempty convex subsets of $[0, 1]$ are of the form
  $[p, q]$ for $p, q \in [0, 1]$, and $[p, q] = p\{1\} + (q - p)[0, 1] + (1 -
  q)\{0\}$.
	\qed
\end{proof}

To describe automata with both nondeterministic and probabilistic
transitions, we will work with convex powersets of distributions. The functor
$\Psc\D$ taking sets $X$ to the set of finitely generated nonempty convex sets of distributions over $X$ can
be given a monad structure.

Explicitly, writing $\omega_A \colon \D\Psc{A} \to \Psc{A}$ for the (affine) convex algebra structure on $\Psc{A}$ for any convex algebra $A$, the composite monad $(\Psc\D, \hat{\delta}, \hat{m})$ is given by
\begin{align}\label{eq:composite}
	\begin{tikzcd}[ampersand replacement=\&]
		X \ar[dashed,bend left=10]{rd}{\hat{\delta}} \ar{d}[swap]{\delta} \\
		\D{X} \ar{r}{\{-\}} \&
			\Psc\D{X}
	\end{tikzcd} &
		&
		\begin{tikzcd}[ampersand replacement=\&]
			\Psc\D\Psc\D{X} \ar[dashed,bend left=10]{rd}{\hat{m}} \ar{d}[swap]{\Psc\omega} \\
			\Psc\Psc\D{X} \ar{r}{\bigcup} \&
				\Psc\D{X}
		\end{tikzcd}
\end{align}

For all convex sets $A$ and finite nonempty subsets $S \subseteq A$, we can
define the \emph{convex closure} of $S$ (sometimes called the \emph{convex
hull}) $\Conv(S) \in \Psc{A}$ by
\[
	\Conv(S) = \{\alpha(d) \mid d \in \D{A}, \supp(d) \subseteq S\} ,
\]
where $\alpha \colon \D{A} \to A$ is the convex algebra structure on $A$.
$\Conv$ is in fact a natural transformation, a fact we will use later.

\begin{lemma}\label{lem:nat}
  For all convex sets $(A, \alpha)$ and $(B, \beta)$, affine maps $f \colon A
  \to B$, and finite nonempty subsets $S \subseteq A$, $(\Psc{f} \circ \Conv)(S)
  = (\Conv \circ \Ps{f})(S)$.
\end{lemma}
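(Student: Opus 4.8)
The plan is to unfold both sides into explicit descriptions as sets of points and then match them directly. Write $\alpha$ and $\beta$ for the $\D$-algebra structures on $A$ and $B$, and recall that the functor $\Psc$ acts on an affine map by direct image. Then the left-hand side is $(\Psc f \circ \Conv)(S) = f[\Conv(S)] = \{f(\alpha(d)) \mid d \in \D A,\ \supp(d) \subseteq S\}$, while the right-hand side is $(\Conv \circ \Ps f)(S) = \Conv(f[S]) = \{\beta(e) \mid e \in \D B,\ \supp(e) \subseteq f[S]\}$, where $f[S] = \Ps f(S)$ is the image of $S$. The central tool is the equivalence between convex sets and $\D$-algebras recorded in the preliminaries: an affine map is exactly a homomorphism of distribution algebras, so $f \circ \alpha = \beta \circ \D f$. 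I would state this identity explicitly and use it to rewrite $f(\alpha(d))$ as $\beta(\D f(d))$ everywhere.

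For the inclusion $\subseteq$, take any $d$ with $\supp(d) \subseteq S$. The pushforward distribution $\D f(d)$ has support contained in $f[\supp(d)] \subseteq f[S]$, so setting $e := \D f(d)$ exhibits $f(\alpha(d)) = \beta(\D f(d)) = \beta(e)$ as an element of the right-hand side. For the reverse inclusion $\supseteq$, given $e$ with $\supp(e) \subseteq f[S]$, I need to produce a distribution $d$ supported on $S$ with $\D f(d) = e$; then $\beta(e) = \beta(\D f(d)) = f(\alpha(d))$ lies in the left-hand side. Since $f[S]$ is literally the image of the finite set $S$, every $b \in \supp(e)$ has at least one preimage in $S$. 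Choosing one preimage $a_b \in S$ for each such $b$ and putting $d = \sum_{b \in \supp(e)} e(b)\,\delta(a_b)$ yields a distribution supported on $S$; the points $a_b$ are distinct because $f(a_b) = b$, so $d$ is well-defined and $\D f(d) = e$ as required.

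Combining the two inclusions gives the equality. The step I expect to require the most care is the construction of the preimage distribution $d$ in the reverse inclusion: $f$ need not be injective, so selecting preimages is a genuine choice, but any choice works because only the pushforward $\D f(d)$ matters and it is insensitive to which preimage was picked. This is also the only point where the finiteness and nonemptiness of $S$ enter. Otherwise the argument is purely set-theoretic, relying solely on the homomorphism identity $f \circ \alpha = \beta \circ \D f$ and the elementary behaviour of supports under pushforward, with no further properties of the monad $\Psc$ needed beyond its action on affine maps by direct image.
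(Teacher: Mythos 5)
Your proof is correct and follows essentially the same route as the paper's: both unfold the two sides, use the fact that an affine map satisfies $f \circ \alpha = \beta \circ \D{f}$, and reduce the claim to the set equality $\{\D{f}(d) \mid d \in \D{A},\ \supp(d) \subseteq S\} = \{e \in \D{B} \mid \supp(e) \subseteq \Ps{f}(S)\}$, whose nontrivial inclusion is handled by constructing a preimage distribution. The only (immaterial) difference is that you place all of $e(b)$ on a single chosen preimage $a_b \in S$, whereas the paper spreads the mass uniformly over all preimages of $b$ in $S$.
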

\begin{proof}
	We will first show that
	\begin{equation}\label{eq:distsets}
		\{\D{f}(d) \mid d \in \D{A}, \supp(d) \subseteq S\} = \{d \in \D{B} \mid \supp(d) \subseteq \{f(a) \mid a \in S\}\}
	\end{equation}
	for all finite nonempty $S \subseteq A$.
	For the inclusion from left to right, note that for each $d \in \D{A}$ such that $\supp(d) \subseteq S$ we have $b \in \supp(\D{f}(d))$ only if there exists $a \in S$ such that $f(a) = b$.
	Thus, $\supp(\D{f}(d)) \subseteq \{f(a) \mid a \in S\}$.
	Conversely, consider $d \in \D{B}$ such that $\supp(d) \subseteq \{f(a) \mid a \in S\}$.
	We define $d' \in \D{A}$ by
	\[
		d'(a) = \frac{d(f(a))}{|\{a' \in S \mid f(a') = f(a)\}|}.
	\]
	Then
	\begin{align*}
		\D{f}(d')(b) &
			= \sum_{a \in A, f(a) = b}d'(a) &
			&
			\text{(definition of $\D{f}$)} \\
		&
			= \sum_{a \in A, f(a) = b}\frac{d(f(a))}{|\{a' \in S \mid f(a') = f(a)\}|} &
			&
			\text{(definition of $d'$)} \\
		&
			= \sum_{a \in A, f(a) = b}\frac{d(b)}{|\{a' \in S \mid f(a') = b\}|} = d(b).
	\end{align*}

	Now we have
	\begin{align*}
		&
			\phantom{{} = {}}(\Psc{f} \circ \Conv)(S) \\
		&
			= \Psc{f}(\{\alpha(d) \mid d \in \D{A}, \supp(d) \subseteq S\}) &
			&
			\text{(definition of $\Conv$)} \\
		&
			= \{f(\alpha(d)) \mid d \in \D{A}, \supp(d) \subseteq S\} &
			&
			\text{(definition of $\Psc{f}$)} \\
		&
			= \{\beta(\D{f}(d)) \mid d \in \D{A}, \supp(d) \subseteq S\} &
			&
			\text{($f$ is affine)} \\
		&
			= \{\beta(d) \mid d \in \D{B}, \supp(d) \subseteq \{f(a) \mid a \in S\}\} &
			&
			\eqref{eq:distsets} \\
		&
			= \Conv(\{f(a) \mid a \in S\}) &
			&
			\text{(definition of $\Conv$)} \\
		&
			= (\Conv \circ \Ps{f})(S) &
			&
			\text{(definition of $\Ps{f}$)}.
	\end{align*}
	\qed
\end{proof}

\subsection{Automata and Language Semantics}\label{sec:automata}

In this section we review the general language semantics for automata with
side-effects provided by a monad (see,
e.g.,~\cite{arbib1975_,SilvaBBR13,goncharov2014towards}). This categorical
framework is the foundation of our language semantics for NPA.

\begin{definition}\label{def:automaton}
  Given a monad $(T, \eta, \mu)$ in the category of sets, an output set $O$, and
  a (finite) alphabet $A$, a $T$-\emph{automaton} is defined by a tuple $(S,
  s_0, \gamma, \{ \tau_a \}_{a \in A})$, where $S$ is the set of \emph{states},
  $s_0 \in S$ is the \emph{initial state}, $\gamma \colon S \to O$ is the
  \emph{output function}, and $\tau_a \colon S \to TS$ for $a \in A$ are the
  \emph{transition functions}.
\end{definition}

This abstract formulation encompasses many standard notions of automata.  For
instance, we recover deterministic (Moore) automata by letting $T$ be the
identity monad; deterministic acceptors are a further specialization where the
output set is the set $2 = \{0, 1\}$, with $0$ modeling rejecting states and $1$
modeling accepting states. If we use the powerset monad, we recover
nondeterministic acceptors. 

Any $T$-automaton can be determinized, using a categorical generalization of the
powerset construction~\cite{SilvaBBR13}.

\begin{definition}
  Given a monad $(T, \eta, \mu)$ in the category of sets, an output set $O$ with
  a $T$-algebra structure $o \colon TO \to O$, and a (finite) alphabet $A$, a
  $T$-automaton $(S, s_0, \gamma, \{ \tau_a \}_{a \in A})$ can be determinized into
  the deterministic automaton $(TS, s_0', \gamma', \{ \tau_a' \}_{a \in A})$
  given by $s_0' = \eta(s_0) \in TS$ and
	\begin{align*}
		\gamma' \colon TS \to O &
			&
			\tau_a' \colon TS \to TS \\
		\gamma' = o \circ T\gamma &
			&
			\tau_a' = \mu \circ T\tau_a.
	\end{align*}
\end{definition}

This construction allows us to define the language semantics of any
$T$-automaton as the semantics of its determinization. More formally, we have
the following definition.

\begin{definition}\label{def:lang}
  Given a monad $(T, \eta, \mu)$ in the category of sets, an output set $O$ with
  a $T$-algebra structure $o \colon TO \to O$, and a (finite) alphabet $A$, the
  \emph{language} accepted by a $T$-automaton $\mathcal{A} = (S, s_0, \gamma, \{
  \tau_a \}_{a \in A})$ is the function $\lang_{\mathcal{A}} \colon A^* \to O$
  given by $\lang_{\mathcal{A}} = (l_{\mathcal{A}} \circ \eta)(s_0)$, where
  $l_{\mathcal{A}} \colon TS \to O^{A^*}$ is defined inductively by
	\begin{align*}
		l_{\mathcal{A}}(s)(\eword) = (o \circ T\gamma)(s) &
			&
			l_{\mathcal{A}}(s)(av) = l_{\mathcal{A}}((\mu \circ T\tau_a)(s))(v).
	\end{align*}
\end{definition}

As an example, we recover deterministic probabilistic automata (DPAs) by
taking $T$ to be the distribution monad $\D$ and letting the output set be the
interval $[0, 1]$.  That is, a DPA with finite\footnote{%
  Concrete automata considered in this paper will have a finite state space, but
the general automaton in Definition~\ref{def:automaton} does not have this
restriction. The distribution monad, for example, does not preserve finite sets
in general.}
state space $S$ has an output function of type $S \to [0, 1]$, and each of its
transition functions is of type $S \to \D{S}$.  In order to derive a semantics
for this automata, we use the usual $\D$-algebra structure $\mathbb{E} \colon
\D[0, 1] \to [0, 1]$ computing the expected weight.

More concretely, the semantics works as follows.  Let $(S, s_0, \gamma, \{
\tau_a \}_{a \in A})$ be a DPA.  At any time while reading a word, we are in a
convex combination of states $\sum_{i = 1}^n p_i s_i$ (equivalently, a
distribution over states).  The current output is given by evaluating the sum
$\sum_{i = 1}^n p_i \gamma(s_i)$.  On reading a symbol $a \in A$, we transition
to the convex combination of convex combinations $\sum_{i = 1}^n p_i
\tau_a(s_i)$, say $\sum_{i = 1}^n p_i \sum_{j = 1}^{m_i} q_{i, j} s_{i, j}$,
which is collapsed to the final convex combination $\sum_{i = 1}^n \sum_{j =
1}^{m_i} p_i q_{i, j} s_{i, j}$ (again, a distribution over states).

\begin{remark}\label{rem:initial}
  One may wonder if the automaton model would be more expressive if the initial
  state $s_0$ in an automaton $(S, s_0, \gamma, \{ \tau_a \}_{a \in A})$ would
  be an element of $TS$ rather than $S$. This is not the case, since we can
  always add a new element to $S$ that simulates $s_0$ by setting its output to
  $(o \circ T\gamma)(s_0)$ and its transition on $a \in A$ to $(\mu \circ
  T\tau_a)(s_0)$.

  For instance, DPAs allowing a distribution over states as the initial state
  can be represented by an initial state distribution $\mu$, an output vector
  $\gamma$, and transitions $\tau_a$. In typical presentations, $\mu$ and
  $\gamma$ are represented as weight vectors over states, and the $\tau_a$ are
  encoded by stochastic matrices.
\end{remark}

\section{Nondeterministic Probabilistic Automata} \label{sec:npa}

We work with an automaton model supporting probabilistic and nondeterministic
behavior, inspired by Segala~\cite{segala-thesis}. On each input letter, the
automaton can choose from a finitely generated nonempty convex set of
distributions over states. After selecting a distribution, the automaton then
makes a probabilistic transition to the following state. Each state has an
output weight in $[0, 1]$.  The following formalization is an instantiation of
Definition~\ref{def:automaton} with the monad $\Psc\D$.

\begin{definition}
	A \emph{nondeterministic probabilistic automaton} (NPA) over a (finite)
	alphabet $A$ is defined by a tuple $(S, s_0, \gamma, \{ \tau_a \}_{a \in
	A})$, where $S$ is a finite set of \emph{states}, $s_0 \in S$ is the
	\emph{initial state}, $\gamma \colon S \to [0, 1]$ is the \emph{output function},
	and $\tau_a \colon S \to \Psc\D{S}$ are the \emph{transition functions}.
\end{definition}

As an example, consider the NPA below.
\begin{equation}\label{eq:npa}
	\begin{gathered}
		\begin{automaton}
			\node[initial,state] (q0) {$1$};
			\node[fill,circle,inner sep=0pt,minimum size=.2cm] (n1) [right of=q0] {};
			\node[state] (q1) [above right of=n1] {$0$};
			\node[state] (q2) [below right of=n1] {$1$};
			\node[fill,circle,inner sep=0pt,minimum size=.2cm] (n2) [right of=n1] {};
			\node[state] (q3) [right of=q2] {$1$};
			\path[->]
			(q0) edge [loop above] node {$a, b$} ()
			(q0) edge node {$a$} (n1)
			(q0) edge [draw=none,loop below,pos=0] node {\scriptsize $s_0$} ()
			(n1) edge node {$\frac{1}{2}$} (q1)
			(n1) edge [swap] node {$\frac{1}{2}$} (q2)
			(q1) edge [loop right] node {$a, b$} ()
			(q1) edge [draw=none,loop above,pos=0] node {\scriptsize $s_1$} ()
			(q2) edge [bend right,swap,pos=.7] node {$a$} (n2)
			(q2) edge [swap] node {$b$} (q3)
			(q2) edge [draw=none,loop below,pos=1] node {\scriptsize $s_2$} ()
			(n2) edge [swap,pos=0] node {$\frac{1}{2}$} (q1)
			(n2) edge [bend right,swap] node {$\frac{1}{2}$} (q2)
			(q3) edge [loop above] node {$a, b$} ()
			(q3) edge [draw=none,loop right,pos=1] node {\scriptsize $s_3$} ();
		\end{automaton}
	\end{gathered}
\end{equation}
States are labeled by their direct output (i.e., their weight from $\gamma$)
while outgoing edges represent transitions.  Additionally, we write the state
name next to each state.  We only indicate a set of generators of the convex
subset that a state transitions into. If one of these generators is a
distribution with nonsingleton support, then a transition into a black dot is
depicted, from which the outgoing transitions represent the distribution.  Those
edges are labeled with probabilities.

Our NPAs recognize weighted languages. The rest of the section is concerned with
formally defining this semantics, based on the general framework from
Section~\ref{sec:automata}.

\subsection{From Convex Algebra to Language Semantics}

To define language semantics for NPAs, we will use the monad structure of
$\Psc\D$.  To be able to use the semantics from Section~\ref{sec:automata}, we
need to specify a $\Psc\D$-algebra structure $o \colon \Psc\D[0, 1] \to [0, 1]$.
Moreover, our model should naturally coincide with DPAs when transitions make no
nondeterministic choices, i.e., when each transition function maps each state to
a singleton distribution over states.  Thus, we require the $\Psc\D$-algebra $o$
to extend the expected weight function $\mathbb{E}$, making the diagram below
commute.
\begin{equation}\label{eq:extend}
	\begin{tikzcd}
		\D[0, 1] \ar{d}[swap]{\{-\}} \ar{rd}{\mathbb{E}} \\
		\Psc\D[0, 1] \ar{r}{o} &
			{[0, 1]}
	\end{tikzcd}
\end{equation}

\subsection{Characterizing the Convex Algebra on $[0, 1]$}

While in principle there could be many different $\Psc\D$-algebras on $[0, 1]$
leading to different language semantics for NPAs, we show that (i) each
algebra extending the $\D$-algebra on $[0, 1]$ is fully determined by a
$\Psc$-algebra on $[0, 1]$, and (ii) there are exactly two $\Psc$-algebras on
$[0, 1]$: the map computing the minimum and the map computing the maximum.

\begin{proposition}
	Any $\Psc\D$-algebra on $[0, 1]$ extending $\mathbb{E} \colon \D[0, 1] \to [0, 1]$ is of the form $\Psc\D[0, 1] \xrightarrow{\Psc{\mathbb{E}}} \Psc[0, 1] \xrightarrow{\alpha} [0, 1]$, where $\alpha$ is a $\Psc$-algebra.
\end{proposition}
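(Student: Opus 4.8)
The plan is to leverage the composite monad structure of $\Psc\D$ to decompose any extending algebra through its two constituent layers. The key observation is that $\Psc\D$ is a composite monad built from the distribution monad $\D$ and the convex powerset monad $\Psc$, and the map $\hat{m}$ in~\eqref{eq:composite} shows how multiplication factors through $\Psc\omega$ and $\bigcup$. Given a $\Psc\D$-algebra $o \colon \Psc\D[0,1] \to [0,1]$ satisfying the extension condition~\eqref{eq:extend}, I want to show it splits as $\alpha \circ \Psc\mathbb{E}$ for some $\Psc$-algebra $\alpha \colon \Psc[0,1] \to [0,1]$.

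First I would define the candidate $\Psc$-algebra by $\alpha = o \circ \Psc(\{-\})$, where $\{-\} \colon [0,1] \to \D[0,1]$ is the point-distribution unit. The idea is that $\Psc(\{-\}) \colon \Psc[0,1] \to \Psc\D[0,1]$ embeds a convex set of scalars as a convex set of point distributions, and then applying $o$ collapses it to a scalar. I would then verify that $\alpha$ is genuinely a $\Psc$-algebra by checking the unit law and the associativity (multiplication) square. The unit law, $\alpha \circ \{-\} = \idf$, should follow from the $\Psc\D$-algebra unit law for $o$ together with the coherence of the composite unit $\hat{\delta} = \{-\} \circ \delta$. The associativity law is more delicate and will use that $o$ respects $\hat{m}$, unpacking the definition of $\hat{m}$ through $\Psc\omega$ and $\bigcup$.

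Second, and this is the crux, I would show the factorization identity $o = \alpha \circ \Psc\mathbb{E}$. Here the extension hypothesis~\eqref{eq:extend}, which says $o \circ \{-\} = \mathbb{E}$ on $\D[0,1]$, is essential: it lets me relate the action of $o$ on a singleton distribution-set $\{d\}$ to the expectation $\mathbb{E}(d)$. The strategy is to express an arbitrary element $U \in \Psc\D[0,1]$ via the algebra-multiplication law. Concretely, I expect to view $U$ as obtained from a convex set of singletons by applying $o$'s compatibility with $\hat{m}$, so that evaluating $o(U)$ reduces to first computing $\Psc\mathbb{E}(U)$ (pushing each distribution to its expectation) and then applying $\alpha$. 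The naturality of the relevant maps and the convex-algebra structure $\omega_{[0,1]}$ on $\Psc[0,1]$ should make these diagrams commute.

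\textbf{The hard part} will be carefully manipulating the composite multiplication $\hat{m}$ to establish the factorization: I must track how $o$ interacts with $\Psc\omega$ and $\bigcup$ when a set of distributions over $[0,1]$ is regarded as living inside $\Psc\D\Psc\D[0,1]$ after suitable unit insertions. The technical obstacle is choosing the right element of $\Psc\D\Psc\D[0,1]$ whose image under $\hat{m}$ is the given $U$, while simultaneously controlling what $\Psc\omega$ does to the inner $\D\Psc$-layer so that the expectation $\mathbb{E}$ emerges cleanly. I would likely argue pointwise on generators, using that $\Psc[0,1]$ is finitely generated (Lemma~\ref{lem:gen}) to reduce to checking the identity on the three generators $\{0\}$, $\{1\}$, and $[0,1]$, or at least on intervals $[p,q]$, where the expectation and the $\Psc$-algebra action can be computed explicitly and compared.
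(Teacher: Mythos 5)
Your proposal follows essentially the same route as the paper: the paper likewise defines $\alpha = o \circ \Psc\delta$ (your $\Psc(\{-\})$, where your $\{-\}$ is what the paper calls the unit $\delta$ of $\D$), establishes the factorization $o = \alpha \circ \Psc\mathbb{E}$ by a diagram chase combining the extension condition~\eqref{eq:extend}, naturality of $\delta$, the unit law for $\omega$, and the $\Psc\D$-algebra law for $o$, and then verifies the two $\Psc$-algebra laws for $\alpha$ by similar diagrams. The only divergence is your suggested fallback of checking the factorization on the generators of $\Psc[0,1]$ from Lemma~\ref{lem:gen}, which is unnecessary here---the purely diagrammatic argument already closes every step, and a generator check would in any case first require knowing both sides are homomorphisms.
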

\begin{proof}
	Let $o \colon \Psc\D[0, 1] \to [0, 1]$ be a $\Psc\D$-algebra extending $\mathbb{E}$.
	We define
	\[
		\alpha = \Psc[0, 1] \xrightarrow{\Psc\delta} \Psc\D[0, 1] \xrightarrow{o} [0, 1].
	\]
	Indeed, the diagram
	\begin{gather*}
		\begin{tikzcd}[ampersand replacement=\&]
			\Psc\D[0, 1] \ar{rrr}{\Psc\mathbb{E}} \ar{rd}{\Psc\{-\}} \ar[equal,bend right=30]{rrddd}{} \ar[bend right=10,phantom]{rrr}[pos=.35]{\eqref{eq:extend}} \&
				\&
				\&
				\Psc[0, 1] \ar{d}[swap]{\Psc\delta} \ar[bend right=80,phantom]{d}{\circled{1}} \\
			\&
				\Psc\Psc\D[0, 1] \ar{r}[swap]{\Psc\delta} \ar{rru}{\Psc{o}} \ar[equal]{rd}{} \&
				\Psc\D\Psc\D[0, 1] \ar{r}{\Psc\D{o}} \ar{d}{\Psc\omega} \ar[bend right=50,phantom]{d}[pos=.4]{\circled{2}} \ar[bend left=10,phantom]{rdd}{\circled{3}} \&
				\Psc\D[0, 1] \ar{dd}[swap]{o} \\
			\&
				\&
				\Psc\Psc\D[0, 1] \ar{d}{\bigcup} \\
			\&
				\&
				\Psc\D[0, 1] \ar{r}{o} \&
				{[0, 1]}
		\end{tikzcd} \\
		\begin{array}{l@{\quad}l@{\quad}l}
			\circled{1}\text{ naturality of $\delta$} &
				\circled{2}\text{ $\omega$ is a convex algebra} &
				\circled{3}\text{ $o$ is a $\Psc\D$-algebra}
		\end{array}
	\end{gather*}
	commutes, so it only remains to show that $\alpha$ is a $\Psc$-algebra.
	This can be seen from the commutative diagrams below.
	\[
		\begin{tikzcd}[column sep=.3cm]
			{[0, 1]} \ar[equal,bend left=20]{rrrrdd}{} \ar{rd}[swap]{\delta} \ar{dd}[swap]{\{-\}} \ar[phantom]{rrrrdd}[pos=.6]{\circled{2}} \\
			&
				\D[0, 1] \ar{rd}[swap]{\{-\}} \\
			\Psc[0, 1] \ar{rr}[pos=.3]{\Psc\delta} \ar[bend left=10,phantom]{ur}[pos=.65]{\circled{1}} &
				&
				\Psc\D[0, 1] \ar{rr}{o} &
				&
				{[0, 1]}
		\end{tikzcd}
		\quad
		\begin{array}{l}
			\circled{1}\text{ naturality of $\{-\}$} \\
			\circled{2}\text{ $o$ is a $\Psc\D$-algebra}
		\end{array}
	\]
	\begin{gather*}
		\begin{tikzcd}[ampersand replacement=\&]
			\Psc\Psc[0, 1] \ar{r}{\Psc\Psc\delta} \ar[bend right=40]{rdd}[swap,pos=.8]{\Psc\Psc\delta} \ar{ddd}[swap]{\bigcup} \&
				\Psc\Psc\D[0, 1] \ar{rr}{\Psc{o}} \ar{d}{\Psc\delta} \ar[equal,bend right=35,shift right=.75cm]{dd}{} \ar[bend right=50,phantom]{d}{\circled{2}} \ar[bend left=5,phantom]{rrd}{\circled{3}} \&
				\&
				\Psc[0, 1] \ar{d}{\Psc\delta} \\
			\&
				\Psc\D\Psc\D[0, 1] \ar{rr}{\Psc\D{o}} \ar{d}{\Psc\omega} \ar[bend left=5,phantom]{rrdd}{\circled{4}} \&
				\&
				\Psc\D[0, 1] \ar{dd}{o} \\
			\&
				\Psc\Psc\D[0, 1] \ar{rd}{\bigcup} \\
			\Psc[0, 1] \ar{rr}{\Psc\delta} \ar[phantom]{ur}{\circled{1}} \&
				\&
				\Psc\D[0, 1] \ar{r}{o} \&
				{[0, 1]}
		\end{tikzcd} \\
		\begin{array}{l@{\quad}l}
			\circled{1}\text{ naturality of $\bigcup$} &
				\circled{3}\text{ naturality of $\delta$} \\
			\circled{2}\text{ $\omega$ is a convex algebra} &
				\circled{4}\text{ $o$ is a $\Psc\D$-algebra}
		\end{array}
	\end{gather*}
	\qed
\end{proof}

\begin{proposition}
	The only $\Psc$-algebras on the convex set $[0, 1]$ are $\minf$ and $\maxf$.
\end{proposition}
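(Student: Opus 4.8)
The plan is to exploit Lemma~\ref{lem:gen}: since $\Psc[0,1]$ is generated as a convex set by $\{0\}$, $\{1\}$, and $[0,1]$, any $\Psc$-algebra $\alpha \colon \Psc[0,1] \to [0,1]$---that is, any affine map satisfying the unit law $\alpha \circ \{-\} = \idf$ and the multiplication law $\alpha \circ \bigcup = \alpha \circ \Psc\alpha$---is completely determined by its values on these three generators. First I would apply the unit law to obtain $\alpha(\{0\}) = 0$ and $\alpha(\{1\}) = 1$, leaving a single degree of freedom $c := \alpha([0,1]) \in [0,1]$. Combining affineness with the decomposition $[p,q] = p\{1\} + (q-p)[0,1] + (1-q)\{0\}$ from the proof of Lemma~\ref{lem:gen} then yields the explicit formula $\alpha([p,q]) = p + (q-p)c$ for every generated convex subset $[p,q]$.

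Next I would pin down $c$ using the multiplication law, which has so far played no role. The key is to choose a nested convex set $\mathcal{U} \in \Psc\Psc[0,1]$ for which both $\bigcup\mathcal{U}$ and $\Psc\alpha(\mathcal{U})$ are easy to evaluate. I would take $\mathcal{U} = \Conv(\{\{0\}, [0,1]\})$; using the convex structure on $\Psc[0,1]$ one computes $\mathcal{U} = \{[0,t] \mid t \in [0,1]\}$, so that $\bigcup\mathcal{U} = [0,1]$ while $\Psc\alpha(\mathcal{U}) = \{\alpha([0,t]) \mid t \in [0,1]\} = \{tc \mid t \in [0,1]\} = [0,c]$. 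Feeding this into $\alpha(\bigcup\mathcal{U}) = \alpha(\Psc\alpha(\mathcal{U}))$ and applying the formula above gives $c = \alpha([0,c]) = c^2$, hence $c \in \{0,1\}$.

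Finally, substituting back into $\alpha([p,q]) = p + (q-p)c$ shows that $c = 0$ forces $\alpha([p,q]) = p = \minf[p,q]$ and $c = 1$ forces $\alpha([p,q]) = q = \maxf[p,q]$, so these are the only two candidates. To complete the argument I would check that $\minf$ and $\maxf$ genuinely are $\Psc$-algebras: both are affine, since a convex combination $\sum_i p_i[a_i,b_i]$ equals $[\sum_i p_i a_i, \sum_i p_i b_i]$, and both satisfy the multiplication law because the minimum (resp.\ maximum) over a union of intervals equals the minimum (resp.\ maximum) of their individual minima (resp.\ maxima).

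I expect the main obstacle to be the choice of $\mathcal{U}$ in the second step. The symmetric candidate $\Conv(\{\{0\}, \{1\}\})$ collapses to the set of all singletons, making the multiplication law degenerate into the tautology $c = c$ and yielding no information. It is precisely the asymmetric choice mixing a singleton with the full interval $[0,1]$ that breaks the symmetry and produces the quadratic constraint $c = c^2$, from which the dichotomy between $\minf$ and $\maxf$ emerges.
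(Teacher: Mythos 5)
Your proposal is correct and follows essentially the same route as the paper: both derive $\alpha(\{0\})=0$, $\alpha(\{1\})=1$ from the unit law, use affineness and the decomposition $[p,q]=p\{1\}+(q-p)[0,1]+(1-q)\{0\}$ to reduce everything to $c=\alpha([0,1])$, and then apply the multiplication law to $\Conv(\{\{0\},[0,1]\})=\{[0,t]\mid t\in[0,1]\}$ to obtain $c=c^2$. The only (cosmetic) difference is that you evaluate $\Psc\alpha$ on this set directly, whereas the paper routes the same computation through the naturality of $\Conv$ (Lemma~\ref{lem:nat}).
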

\begin{proof}
	Let $\alpha \colon \Psc[0, 1] \to [0, 1]$ be a $\Psc$-algebra.
	Then for any $r \in [0, 1]$, $\alpha(\{r\}) = r$, and the diagram below must commute.
	\begin{equation}\label{eq:alg}
		\begin{tikzcd}
			\Psc\Psc[0, 1] \ar{r}{\Psc\alpha} \ar{d}[swap]{\bigcup} &
				\Psc[0, 1] \ar{d}{\alpha} \\
			\Psc[0, 1] \ar{r}{\alpha} &
				{[0, 1]}
		\end{tikzcd}
	\end{equation}
	Furthermore, $\alpha$ is an affine map.
	Since $\Conv(\{\{0\}, \{1\}, [0, 1]\}) = \Psc[0, 1]$ by Lemma~\ref{lem:gen}, $\alpha(\{0\}) = 0$, and $\alpha(\{1\}) = 1$, $\alpha$ is completely determined by $\alpha([0, 1])$.
	We now calculate that
	\begin{align*}
		\alpha([0, 1]) &
			= \alpha\left(\bigcup\{[0, p] \mid p \in [0, 1]\}\right) \\
		&
			= (\alpha \circ {\bigcup} \circ \Conv)(\{\{0\}, [0, 1]\}) \\
		&
			= (\alpha \circ \Psc\alpha \circ \Conv)(\{\{0\}, [0, 1]\}) &
			&
			\eqref{eq:alg} \\
		&
			= (\alpha \circ \Conv \circ \Ps\alpha)(\{\{0\}, [0, 1]\}) &
			&
			\text{(Lemma~\ref{lem:nat})} \\
		&
			= (\alpha \circ \Conv)(\{\alpha(\{0\}), \alpha([0, 1])\}) &
			&
			\text{(definition of $\Psc\alpha$)} \\
		&
			= (\alpha \circ \Conv)(\{0, \alpha([0, 1])\}) \\
		&
			= \alpha([0, \alpha([0, 1])]) \\
		&
			= \alpha(\alpha([0, 1])[0, 1] + (1 - \alpha([0, 1]))\{0\}) \\
		&
			= \alpha([0, 1]) \cdot \alpha([0, 1]) + (1 - \alpha([0, 1])) \cdot \alpha(\{0\}) &
			&
			\text{($\alpha$ is affine)} \\
		&
			= \alpha([0, 1])^2 + (1 - \alpha([0, 1])) \cdot 0 \\
		&
			= \alpha([0, 1])^2.
	\end{align*}
	Thus, we have either $\alpha([0, 1]) = 0$ or $\alpha([0, 1]) = 1$.
	Consider any finitely generated nonempty convex subset $[p, q] \subseteq [0, 1]$.
	If $\alpha([0, 1]) = 0$, then Lemma~\ref{lem:gen} gives
	\begin{align*}
		\alpha([p, q]) &
			= \alpha(p\{1\} + (q - p)[0, 1] + (1 - q)\{0\}) \\
		&
			= p \cdot \alpha(\{1\}) + (q - p) \cdot \alpha([0, 1]) + (1 - q) \cdot \alpha(\{0\}) \\
		&
			= p \cdot 1 + (q - p) \cdot 0 + (1 - q) \cdot 0 = p = \minf([p, q]);
	\end{align*}
	if $\alpha([0, 1]) = 1$, then
	\begin{align*}
		\alpha([p, q]) &
			= \alpha(p\{1\} + (q - p)[0, 1] + (1 - q)\{0\}) \\
		&
			= p \cdot \alpha(\{1\}) + (q - p) \cdot \alpha([0, 1]) + (1 - q) \cdot \alpha(\{0\}) \\
		&
			= p \cdot 1 + (q - p) \cdot 1 + (1 - q) \cdot 0 = q = \maxf([p, q]).
	\end{align*}
  We now show that $\minf$ is an algebra; the case for $\maxf$ is analogous.  We have
	\begin{align*}
		\minf\left(\sum_{i = 1}^n r_i[p_i, q_i]\right) &
			= \minf\left(\left[\sum_{i = 1}^n r_i \cdot p_i, \sum_{i = 1}^n r_i \cdot q_i\right]\right) \\
		&
			= \sum_{i = 1}^n r_i \cdot p_i \\
		&
			= \sum_{i = 1}^n r_i \cdot \minf([p_i, q_i]) ,
	\end{align*}
	so $\minf$ is an affine map.
	Furthermore, clearly $\minf(\{r\}) = r$ for all $r \in [0, 1]$, and for all $S \in \Psc\Psc[0, 1]$,
	\[
		\minf\left(\bigcup S\right) = \minf(\{\minf(T) \mid T \in S\}) = (\minf \circ \Psc\minf)(S).
	\]
	\qed
\end{proof}

\begin{corollary}\label{cor:minmax}
	The only $\Psc\D$-algebras on $[0, 1]$ extending $\mathbb{E}$ are $\Psc\D[0, 1] \xrightarrow{\Psc{\mathbb{E}}} \Psc[0, 1] \xrightarrow{\minf} [0, 1]$ and $\Psc\D[0, 1] \xrightarrow{\Psc{\mathbb{E}}} \Psc[0, 1] \xrightarrow{\maxf} [0, 1]$.
\end{corollary}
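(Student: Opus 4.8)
The plan is to read the statement off directly from the two preceding propositions, since together they already pin down the extending algebra. First I would invoke the first proposition: any $\Psc\D$-algebra $o$ on $[0, 1]$ extending $\mathbb{E}$ necessarily factors as $o = \alpha \circ \Psc{\mathbb{E}}$ for some $\Psc$-algebra $\alpha \colon \Psc[0, 1] \to [0, 1]$. Then I would apply the second proposition, which classifies the $\Psc$-algebras on the convex set $[0, 1]$ as exactly $\minf$ and $\maxf$. Combining the two immediately forces $o \in \{\minf \circ \Psc{\mathbb{E}}, \maxf \circ \Psc{\mathbb{E}}\}$, which is the ``no others'' half of the claimed classification and is literally a one-line deduction.

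To turn this into a full classification I would also check the converse direction, namely that each of the two composites genuinely \emph{is} a $\Psc\D$-algebra extending $\mathbb{E}$; this is the only real content beyond quoting the propositions. One clean route is to observe that $\Psc$ is a monad on the category of convex algebras (the Eilenberg--Moore category of $\D$) and that the composite $\Psc\D$ on $\Set$ arises from the associated distributive law, as encoded by $\hat{\delta}$ and $\hat{m}$ in \eqref{eq:composite}. Standard composite-monad theory then yields a bijection between $\Psc\D$-algebras on $[0, 1]$ whose underlying $\D$-algebra is $\mathbb{E}$ (i.e.\ those satisfying \eqref{eq:extend}) and $\Psc$-algebra structures on the convex algebra $([0, 1], \mathbb{E})$. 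Under this bijection the two $\Psc$-algebras $\minf$ and $\maxf$ produced in the second proposition correspond precisely to $\minf \circ \Psc{\mathbb{E}}$ and $\maxf \circ \Psc{\mathbb{E}}$, so both composites are bona fide algebras and the extension property holds by construction.

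If I wanted to avoid invoking the abstract correspondence, I would instead verify the unit and multiplication squares for an algebra for the monad $\Psc\D$ directly for $\minf \circ \Psc{\mathbb{E}}$, using that $\minf$ already satisfies its own $\Psc$-algebra laws (shown in the second proposition) together with naturality and the defining diagrams of $\hat{\delta}$ and $\hat{m}$; the $\maxf$ case is symmetric. The main obstacle is therefore not the necessity direction but making this existence direction rigorous: one must check that factoring through $\Psc{\mathbb{E}}$ interacts correctly with the composite multiplication $\hat{m} = \bigcup \circ \Psc\omega$, so that the $\Psc$-algebra laws for $\alpha$ genuinely lift to the $\Psc\D$-algebra laws for $\alpha \circ \Psc{\mathbb{E}}$. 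Once that compatibility is confirmed, the corollary follows.
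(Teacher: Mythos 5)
Your proposal matches the paper's (implicit) argument: the corollary is stated without a separate proof, as the immediate combination of the two preceding propositions, which is exactly your first paragraph. Your additional care about the existence direction is sensible but already covered by the paper---the proof of the second proposition explicitly verifies that $\minf$ (and analogously $\maxf$) satisfies the $\Psc$-algebra laws on the convex set $[0,1]$, and the correspondence for composite monads that you cite (or the direct verification you sketch) turns these into $\Psc\D$-algebras extending $\mathbb{E}$.
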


Consider again the NPA \eqref{eq:npa}.  Since we can always choose to remain in
the initial state, the $\maxf$ semantics assigns 1 to each word for this
automaton.  The $\minf$ semantics is more interesting.  Consider reading the
word $aa$.  On the first $a$, we transition from $s_0$ to $\Conv\{s_0,
\frac{1}{2}s_1 + \frac{1}{2}s_2\} \in \Psc\D{S}$.  Reading the second $a$ gives
\[
	\Conv\left\{\Conv\left\{s_0, \tfrac{1}{2}s_1 + \tfrac{1}{2}s_2\right\}, \tfrac{1}{2}\{s_1\} + \tfrac{1}{2}\left\{\tfrac{1}{2}s_1 + \tfrac{1}{2}s_2\right\}\right\} \in \Psc\D\Psc\D{S}.
\]
Now we first apply $\Psc\omega$ to eliminate the outer distribution, arriving at
\[
	\Conv\left\{\Conv\left\{s_0, \tfrac{1}{2}s_1 + \tfrac{1}{2}s_2\right\}, \left\{\tfrac{3}{4}s_1 + \tfrac{1}{4}s_2\right\}\right\} \in \Psc\Psc\D{S}.
\]
Taking the union yields
\[
	\Conv\left\{s_0, \tfrac{1}{2}s_1 + \tfrac{1}{2}s_2, \tfrac{3}{4}s_1 + \tfrac{1}{4}s_2\right\} \in \Psc\D{S},
\]
which leads to the convex subset of distributions over outputs
\[
	\Conv\left\{1, \tfrac{1}{2} \cdot 0 + \tfrac{1}{2}\cdot 1, \tfrac{3}{4} \cdot 0 + \tfrac{1}{4} \cdot 1\right\} \in \Psc\D[0, 1].
\]
Calculating the expected weights gives $\Conv\{1, \frac{1}{2}, \frac{1}{4}\} \in \Psc[0, 1]$, which has a minimum of $\frac{1}{4}$.
One can show that on reading any word $u \in A^*$ the automaton outputs $2^{-n}$, where $n$ is the length of the longest sequence of $a$'s occurring in $u$.

The semantics coming from $\maxf$ and $\minf$ are highly symmetrical; in a
sense, they are two representations of the same semantics.\footnote{%
  The $\maxf$ semantics is perhaps preferable since it recovers standard
  nondeterministic finite automata when there is no probabilistic choice and the output weights are in $\{0, 1\}$, but
this is a minor point.}
Technically, we establish the following relation between the two
semantics---this will be useful to avoid repeating proofs twice for each
property.

\begin{proposition}\label{prop:relation}
	Consider an NPA $\mathcal{A} = (S, s_0, \gamma, \{\tau_a\}_{a \in A})$ under the $\minf$ semantics.
	Define $\gamma' \colon S \to [0, 1]$ by $\gamma'(s) = 1 - \gamma(s)$, and consider the NPA $\mathcal{A}' = (S, s_0, \gamma', \{\tau_a\}_{a \in A})$ under the $\maxf$ semantics.
	Then $\lang_{\mathcal{A}'}(u) = 1 - \lang_{\mathcal{A}}(u)$ for all $u \in A^*$.
\end{proposition}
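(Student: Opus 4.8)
The plan is to exploit the evident duality between $\minf$ and $\maxf$ mediated by the complement map $c \colon [0,1] \to [0,1]$, $c(x) = 1 - x$. Since $\mathcal{A}$ and $\mathcal{A}'$ share the same state set, initial state, and transition functions $\{\tau_a\}$, and differ only in their output function ($\gamma' = c \circ \gamma$) and in the algebra interpreting outputs ($\minf \circ \Psc{\mathbb{E}}$ versus $\maxf \circ \Psc{\mathbb{E}}$, by Corollary~\ref{cor:minmax}), I expect the statement to reduce to two elementary commutation properties of $c$, which is itself an affine automorphism of $[0,1]$.

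First I would record the two facts that drive everything. (1) By linearity of expectation, $\mathbb{E}(\D{c}(d)) = 1 - \mathbb{E}(d) = c(\mathbb{E}(d))$, so $\mathbb{E} \circ \D{c} = c \circ \mathbb{E}$; applying the functor $\Psc$ and using functoriality yields $\Psc{\mathbb{E}} \circ \Psc\D{c} = \Psc{c} \circ \Psc{\mathbb{E}}$. (2) Every finitely generated nonempty convex subset of $[0,1]$ is an interval $[p,q]$, and $\Psc{c}([p,q]) = [1-q, 1-p]$, whence $\maxf(\Psc{c}([p,q])) = 1 - p = c(\minf([p,q]))$; thus $\maxf \circ \Psc{c} = c \circ \minf$.

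The core of the argument is then an induction on the input word $u$, proving the \emph{stronger} statement that $l_{\mathcal{A}'}(\xi)(u) = c(l_{\mathcal{A}}(\xi)(u))$ for every $\xi \in \Psc\D{S}$, where $l_{\mathcal{A}}$ uses the $\minf$-algebra and $l_{\mathcal{A}'}$ the $\maxf$-algebra. For $u = \eword$, I would chase
\[
  l_{\mathcal{A}'}(\xi)(\eword) = (\maxf \circ \Psc{\mathbb{E}} \circ \Psc\D\gamma')(\xi),
\]
rewrite $\Psc\D\gamma' = \Psc\D{c} \circ \Psc\D\gamma$ by functoriality, and push $c$ outward using facts (1) and (2) to land on $c((\minf \circ \Psc{\mathbb{E}} \circ \Psc\D\gamma)(\xi)) = c(l_{\mathcal{A}}(\xi)(\eword))$. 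For $u = av$, since the transition step $\hat{m} \circ \Psc\D{\tau_a}$ is identical in both automata and involves no output algebra, the defining equation $l(\xi)(av) = l((\hat{m} \circ \Psc\D{\tau_a})(\xi))(v)$ lets me apply the induction hypothesis at the state $(\hat{m} \circ \Psc\D{\tau_a})(\xi)$ directly. Finally, instantiating the stronger claim at $\xi = \hat{\delta}(s_0)$ gives $\lang_{\mathcal{A}'}(u) = c(\lang_{\mathcal{A}}(u)) = 1 - \lang_{\mathcal{A}}(u)$.

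I do not anticipate a genuine obstacle: the only point requiring mild care is to isolate that the output algebra $o$ enters the inductive definition of $l$ solely through the base case, so that the inductive step goes through verbatim and the entire discrepancy between the two semantics is concentrated in the $\eword$ case, where facts (1) and (2) resolve it.
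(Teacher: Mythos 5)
Your proposal is correct and follows essentially the same route as the paper's proof: the same strengthened claim for arbitrary $\xi \in \Psc\D{S}$, induction on the word with all the work concentrated in the $\eword$ case, and the observation that the shared transition step makes the inductive step immediate. The only difference is presentational---you package the base case as two commutation facts for the complement map ($\Psc{\mathbb{E}} \circ \Psc\D{c} = \Psc{c} \circ \Psc{\mathbb{E}}$ and $\maxf \circ \Psc{c} = c \circ \minf$), whereas the paper carries out the equivalent computation directly on the sums.
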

\begin{proof}
	We prove the stronger property that for all $x \in \Psc\D{S}$ and $u \in A^*$, $l_{\mathcal{A}'}(x)(u) = 1 - l_{\mathcal{A}}(x)(u)$ by induction on $u$.
	This is sufficient because $\mathcal{A}$ and $\mathcal{A}'$ have the same initial state.
	We have
	\begin{align*}
		&
			\phantom{{} = {}}l_{\mathcal{A}'}(x)(\eword) \\
		&
			= (\maxf \circ \Psc\mathbb{E} \circ \Psc\D\gamma')(x) &
			&
			\text{(Definition~\ref{def:lang})} \\
		&
			= (\maxf \circ \Psc\mathbb{E})\left(\left\{\lambda p.\,\sum_{s \in S, \gamma'(s) = p} d(s) \mathrel{}\middle|\mathrel{} d \in x\right\}\right) &
			&
			\text{(definition of $\Psc\D\gamma'$)} \\
		&
			= \maxf\left(\left\{\sum_{p \in [0, 1]} p \sum_{s \in S, \gamma'(s) = p} d(s) \mathrel{}\middle|\mathrel{} d \in x\right\}\right) &
			&
			\text{(definition of $\Psc\mathbb{E}$)} \\
		&
			= \maxf\left(\left\{\sum_{p \in [0, 1]} p \sum_{s \in S, \gamma(s) = 1 - p} d(s) \mathrel{}\middle|\mathrel{} d \in x\right\}\right) &
			&
			\text{(definition of $\gamma'$)} \\
		&
			= \maxf\left(\left\{\sum_{p \in [0, 1]} (1 - p) \sum_{s \in S, \gamma(s) = p} d(s) \mathrel{}\middle|\mathrel{} d \in x\right\}\right) \\
		&
			= \maxf\left(\left\{\sum_{p \in [0, 1]} (1 - p) \cdot \D\gamma(d)(p) \mathrel{}\middle|\mathrel{} d \in x\right\}\right) \\
		&
			= \maxf\left(\left\{1 - \sum_{p \in [0, 1]} p \cdot \D\gamma(d)(p) \mathrel{}\middle|\mathrel{} d \in x\right\}\right) \\
		&
			= 1 - \minf\left(\left\{\sum_{p \in [0, 1]} p \cdot \D\gamma(d)(p) \mathrel{}\middle|\mathrel{} d \in x\right\}\right) \\
		&
			= 1 - (\minf \circ \Psc\mathbb{E})(\{\D\gamma(d) \mid d \in x\}) &
			&
			\text{(definition of $\Psc\mathbb{E}$)} \\
		&
			= 1 - (\minf \circ \Psc\mathbb{E} \circ \Psc\D\gamma)(x) &
			&
			\text{(definition of $\Psc\D\gamma'$)} \\
		&
			= 1 - l_{\mathcal{A}}(x)(\eword) &
			&
			\text{(Definition~\ref{def:lang})}.
	\end{align*}
	Furthermore,
	\begin{align*}
		l_{\mathcal{A}'}(x)(av) &
			= l_{\mathcal{A}'}\left(\left({\bigcup} \circ \Psc\omega \circ \Psc\D\tau_a\right)(x)\right)(v) &
			&
			\text{(Definition~\ref{def:lang})} \\
		&
			= 1 - l_{\mathcal{A}}\left(\left({\bigcup} \circ \Psc\omega \circ \Psc\D\tau_a\right)(x)\right)(v) &
			&
			\text{(induction hypothesis)} \\
		&
			= 1 - l_{\mathcal{A}}(x)(av) &
			&
			\text{(Definition~\ref{def:lang})}.
	\end{align*}
	This concludes the proof.
	\qed
\end{proof}

\section{Expressive Power of NPAs} \label{sec:expressive}

Our convex language semantics for NPAs coincides with the standard semantics for
DPAs when all convex sets in the transition functions are singleton sets. In
this section, we show that NPAs are in fact strictly more expressive than DPAs.
We give two results. First, we exhibit a concrete language over a binary
alphabet that is recognizable by a NPA, but not recognizable by any DPA.  This
argument uses elementary facts about the Hankel matrix, and actually shows that
NPAs are strictly more expressive than weighted finite automata (WFAs).

Next, we separate NPAs and DPAs over a unary alphabet.  This argument is
substantially more technical, relying on deeper results from number theory about
linear recurrence sequences.

\subsection{Separating NPAs and DPAs: Binary Alphabet}

Consider the language $\lang_a \colon \{a, b\}^* \to [0, 1]$ by $\lang_a(u) =
2^{-n}$, where $n$ is the length of the longest sequence of $a$'s occurring in
$u$.
Recall that this language is accepted by the NPA \eqref{eq:npa} using the $\minf$ algebra.

\begin{theorem}
	NPAs are more expressive than DPAs.
	Specifically, there is no DPA, or even WFA, accepting $\lang_a$.
\end{theorem}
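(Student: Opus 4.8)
The plan is to use the classical Hankel-matrix characterization of weighted automata. Recall that a weighted language $f \colon A^* \to \R$ is accepted by some WFA with finitely many states over the field $\R$ if and only if its Hankel matrix $H_f$, indexed by words on both axes with $(H_f)_{u, v} = f(uv)$, has finite rank (the Fliess/Carlyle--Paz theorem). Since every DPA is in particular a WFA---its initial distribution, stochastic transition matrices, and output vector are all real-valued, as noted in Remark~\ref{rem:initial}---it suffices to prove that the Hankel matrix of $\lang_a$ has infinite rank over $\R$. The NPA \eqref{eq:npa} already witnesses that $\lang_a$ is accepted by an NPA under $\minf$, so this would establish the separation.

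First I would isolate a convenient infinite submatrix of $H_{\lang_a}$. Take rows indexed by $a^i$ for $i \ge 0$ and columns indexed by $b a^j$ for $j \ge 0$, so that the corresponding entry is $\lang_a(a^i b a^j)$. The word $a^i b a^j$ has exactly two maximal runs of $a$'s, of lengths $i$ and $j$, so its longest run of $a$'s has length $\max(i, j)$, giving the submatrix $M_{i, j} = 2^{-\max(i, j)}$. Because the rank of any submatrix lower-bounds the rank of the full Hankel matrix, it is enough to show that $M$ has infinite rank. The key design choice here---the letter $b$ separating the two blocks of $a$'s---is what decouples the entries into a $\max$ expression that can be shown to have infinite rank; this choice is the main conceptual obstacle, after which the argument is routine.

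The core computation is to exhibit infinitely many linearly independent vectors in the row space of $M$. I would compute the consecutive row differences $D_i = M_{i, -} - M_{i+1, -}$. Writing $M_{i, j} = 2^{-i}$ for $j \le i$ and $M_{i, j} = 2^{-j}$ for $j > i$, a short calculation shows that $D_i$ takes the value $2^{-(i+1)}$ on the columns $j \le i$ and the value $0$ on the columns $j > i$. These difference vectors have strictly growing, echelon-shaped supports---$D_i$ is supported exactly on $\{0, \dots, i\}$---and are therefore linearly independent. Hence the row space of $M$ is infinite-dimensional, so $M$, and a fortiori $H_{\lang_a}$, has infinite rank. By the Hankel characterization no WFA accepts $\lang_a$, and in particular no DPA does, completing the proof that NPAs are strictly more expressive than DPAs and WFAs.
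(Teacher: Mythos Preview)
Your proof is correct and follows essentially the same approach as the paper: both isolate the Hankel submatrix with entries $\lang_a(a^i b a^j)=2^{-\max(i,j)}$ and show it has infinite rank, the paper via an inductive two-point evaluation of the residuals $l(a^i b)$ at $a^m$ and $a^{m+1}$, you via consecutive row differences yielding an echelon family. The row-difference argument is a tidy alternative to the paper's induction, but the underlying idea---the $b$ separating two blocks of $a$'s so that the entry depends only on $\max(i,j)$---is identical.
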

\begin{proof}
	Assume there exists a WFA accepting $\lang_a$, and let $l(u)$ for $u \in \{a, b\}^*$ be the language of the linear combination of states reached after reading the word $u$.
	We will show that the languages $l(a^nb)$ for $n \in \N$ are linearly independent.
	Since the function that assigns to each linear combination of states its accepted language is a linear map, this implies that the set of linear combinations of states of the WFA is a vector space of infinite dimension, and hence the WFA cannot exist.

	The proof is by induction on a natural number $m$.
	Assume that for all natural numbers $i \le m$ the languages $l(a^ib)$ are linearly independent.
	For all $i \le m$ we have $l(a^ib)(a^m) = 2^{-m}$ and $l(a^ib)(a^{m + 1}) = 2^{-m - 1}$; however, $l(a^{m + 1}b)(a^m) = l(a^{m + 1}b)(a^{m + 1}) = 2^{-m - 1}$.
	If $l(a^{m + 1}b)$ is a linear combination of the languages $l(a^ib)$ for $i \le m$, then there are constants $c_1, \ldots, c_m \in \R$ such that in particular
  \[
    (c_1 + \cdots + c_m)2^{-m} = 2^{-m - 1}
    \qquad\text{and}\qquad
    (c_1 + \cdots + c_m)2^{-m - 1} = 2^{-m - 1} .
  \]
  These equations cannot be satisfied.
  Therefore, for all natural numbers $i \le m + 1$ the languages $l(a^ib)$ are linearly independent.
  We conclude by induction that for all $m \in \N$ the languages $l(a^ib)$ for $i \le m$ are linearly independent, which implies that all languages $l(a^nb)$ for $n \in \N$ are linearly independent.
  \qed
\end{proof}

A similar argument works for NPAs under the $\maxf$ algebra semantics---one can
easily repeat the argument in the above theorem for the language accepted by the
NPA resulting from applying Proposition~\ref{prop:relation} to the NPA
\eqref{eq:npa}.

\subsection{Separating NPAs and DPAs: Unary Alphabet}

We now turn to the unary case. A weighted language over a unary alphabet can be
represented by a sequence $\langle u_i \rangle = u_0, u_1, \dots$ of real
numbers. We will give such a language that is recognizable by a NPA but not
recognizable by any WFA (and in particular, any DPA) using
results on \emph{linear recurrence sequences}, an established tool for studying
unary weighted languages.

We begin with some mathematical preliminaries. A sequence of real numbers
$\langle u_i \rangle$ is a \emph{linear recurrence sequence} (LRS) if for some
integer $k \in \mathbb{N}$ (the \emph{order}), constants $u_0, \dots, u_{k - 1}
\in \mathbb{R}$ (the \emph{initial conditions}), and coefficients $b_0, \dots,
b_{k - 1} \in \mathbb{R}$, we have
\[
  u_{n + k} = b_{k - 1} u_{n - 1} + \cdots + b_0 u_n
\]
for every $n \in \mathbb{N}$. A well-known example of an LRS is the
\emph{Fibonacci sequence}, an order-$2$ LRS satisfying the recurrence $f_{n + 2}
= f_{n + 1} + f_n$. Another example of an LRS is any constant sequence, i.e.,
$\langle u_i \rangle$ with $u_i = c$ for all $i$.

Linear recurrence sequences are closed under linear combinations: for any two
LRS $\langle u_i \rangle, \langle v_i \rangle$ and constants $\alpha, \beta \in
\mathbb{R}$, the sequence $\langle \alpha u_i + \beta v_i \rangle$ is again an
LRS (possibly of larger order). We will
use one important theorem about LRSs. See the monograph by Everest et
al.~\cite{DBLP:books/daglib/0035746} for details.

\begin{theorem}[Skolem-Mahler-Lech] \label{thm:sml}
	If $\langle u_i \rangle$ is an LRS, then its \emph{zero set} $\{i \in \N \mid u_i = 0\}$ is the union of a finite set along
  with finitely many arithmetic progressions (i.e., sets of the form $\{ p + k n
  \mid n \in \mathbb{N} \}$ with $k \neq 0$).
\end{theorem}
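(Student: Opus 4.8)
The plan is to follow Skolem's classical $p$-adic method, since this is a deep number-theoretic statement rather than something provable by elementary means. First I would pass to the generalized power-sum representation of the sequence: writing $\chi(x) = x^k - b_{k-1}x^{k-1} - \cdots - b_0$ for the characteristic polynomial and factoring it over an algebraic closure, every LRS can be written as $u_n = \sum_{j=1}^{m} P_j(n)\,\lambda_j^{\,n}$, where the $\lambda_j$ are the distinct nonzero roots of $\chi$ and the $P_j$ are polynomials whose degrees are bounded by the corresponding multiplicities. (A zero root only affects finitely many initial terms, which can be absorbed into the finite part of the zero set.) All the data $\lambda_j$ and the coefficients of the $P_j$ live in a finitely generated extension $K$ of $\mathbb{Q}$ (generated by the $b_i$, the $u_i$, and the roots $\lambda_j$), so the problem reduces to one over such a field.

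Next I would embed $K$ into a complete nonarchimedean field. By a standard specialization argument, a finitely generated domain over $\mathbb{Q}$ admits, for infinitely many primes $p$, an embedding into a finite extension of $\mathbb{Q}_p$ sending the relevant algebraic numbers to $p$-adic units. I would choose such a $p$ avoiding the finitely many bad primes, so that each $\lambda_j$ becomes a unit in the ring of integers $\mathcal{O}$ of the local field. Because the residue field is finite, there is an integer $N$ with $\lambda_j^{\,N} \equiv 1$ modulo the maximal ideal for every $j$, i.e.\ $\lambda_j^{\,N} = 1 + \pi\mu_j$ with $\mu_j \in \mathcal{O}$ and $\pi$ a uniformizer of positive valuation.

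Then I would split the sequence into the $N$ residue classes modulo $N$. Fixing $r \in \{0,\dots,N-1\}$ and writing $n = Nt + r$, each term becomes
\[
  P_j(Nt+r)\,\lambda_j^{\,r}\,(1+\pi\mu_j)^{t},
\]
and the key point is that $t \mapsto (1+\pi\mu_j)^{t}$ extends, via the binomial series, to a $p$-adic analytic function on all of $\mathbb{Z}_p$ (this is where the unit-congruent-to-$1$ condition is essential for convergence). Hence $f_r(t) := u_{Nt+r}$ is a convergent $p$-adic power series in $t$. By Strassmann's theorem, a nonzero convergent power series over a complete nonarchimedean field has only finitely many zeros, so for each $r$ either $f_r \equiv 0$---meaning the entire arithmetic progression $\{Nt + r\}$ lies in the zero set---or $f_r$ vanishes at only finitely many $t$. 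Taking the union over the $N$ residue classes yields exactly a finite set together with finitely many arithmetic progressions, and folding in the finitely many initial indices excluded earlier completes the argument.

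The main obstacle is setting up the $p$-adic machinery correctly: one must choose the prime $p$ so that all roots $\lambda_j$ reduce to units (good reduction), justify the embedding of the possibly transcendental data $b_i, u_i$ into a $p$-adic field, and verify the convergence of $(1+\pi\mu_j)^{t}$ together with the hypotheses of Strassmann's theorem. None of these steps is a routine computation, which is precisely why the statement is quoted from the literature rather than reproved here.
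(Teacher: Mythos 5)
The paper does not prove this statement at all---it is imported as a known deep result, with a citation to Everest et al.\ for details---so there is no in-paper argument to compare against. Your sketch is a faithful outline of the standard Skolem--Mahler--Lech proof from that literature (power-sum representation, embedding into a $p$-adic field with the roots as units, splitting into residue classes modulo $N$ so that $\lambda_j^N \equiv 1$, $p$-adic analytic interpolation, and Strassmann's theorem), and it correctly identifies the genuinely hard steps rather than glossing over them.
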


This is a celebrated result in number theory and not at all easy to prove. To
make the connection to probabilistic and weighted automata, we will use two
results. The first proposition follows from the Cayley-Hamilton Theorem.

\begin{proposition}[see, e.g.,~\cite{ouaknine2014positivity}] \label{prop:pa-to-lrs}
  Let $\lang$ be a weighted unary language recognizable by a weighted
  automaton $W$. Then the sequence of weights $\langle u_i \rangle$ with $u_i =
  \lang(a^i)$ is an LRS, where the order is at most the number of states in $W$.
\end{proposition}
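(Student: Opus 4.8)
The plan is to pass from the automaton $W$ to its standard \emph{linear representation} and then invoke the Cayley--Hamilton theorem. Over the unary alphabet $\{a\}$, a WFA with $k$ states is determined by an initial row vector $\iota \in \R^{1 \times k}$, a single transition matrix $M \in \R^{k \times k}$ (a single matrix, since there is only one letter), and an output column vector $o \in \R^{k \times 1}$, with accepted language $\lang(a^i) = \iota M^i o$. Thus $u_i = \iota M^i o$ for every $i \in \N$, and it suffices to show that this scalar sequence satisfies a linear recurrence of order at most $k$.

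First I would recall the Cayley--Hamilton theorem: the matrix $M$ is annihilated by its own characteristic polynomial, which is monic of degree exactly $k$, the number of states. Writing it as $p(x) = x^k - b_{k-1}x^{k-1} - \cdots - b_0$, we obtain
\[
  M^k = b_{k-1}M^{k-1} + \cdots + b_1 M + b_0 I .
\]
Next, for any $n \in \N$ I would multiply this identity on the left by $M^n$ and then sandwich it between $\iota$ and $o$:
\[
  u_{n+k} = \iota M^{n+k} o
  = \iota M^n \left( \sum_{j=0}^{k-1} b_j M^j \right) o
  = \sum_{j=0}^{k-1} b_j\, \iota M^{n+j} o
  = \sum_{j=0}^{k-1} b_j\, u_{n+j} .
\]
This is exactly a linear recurrence with coefficients $b_0, \dots, b_{k-1}$ read off from the characteristic polynomial and initial conditions $u_0, \dots, u_{k-1}$ determined by $\iota$, $M$, and $o$; hence $\langle u_i \rangle$ is an LRS of order at most the number of states of $W$. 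The order may be strictly smaller when the minimal polynomial of $M$ has degree below $k$, which is consistent with the stated upper bound.

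I do not expect a serious obstacle here; the only point requiring care is expository rather than mathematical, namely recording the linear-representation semantics of a WFA (initial weights, a transition matrix, output weights, and the matrix-power formula $\lang(a^i) = \iota M^i o$), since the excerpt refers to WFAs without formally defining them. Once that is fixed, the result is a one-line application of Cayley--Hamilton, and the bound on the order follows immediately from the degree of the characteristic polynomial being the number of states.
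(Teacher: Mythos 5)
Your proof is correct and follows exactly the route the paper intends: the paper gives no detailed argument, merely noting that the proposition ``follows from the Cayley--Hamilton Theorem'' and citing the literature, and your linear-representation-plus-Cayley--Hamilton argument is precisely that standard derivation. Nothing is missing.
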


While not every LRS can be recognized by a DPA, it is known that DPAs can
recognize a weighted language encoding the sign of a given LRS.

\begin{theorem}[{Akshay, et al.~\cite[Theorem 3, Corollary 4]{akshay2015reachability}}] \label{thm:lrs-to-pa}
  Given any LRS $\langle u_i \rangle$, there exists a stochastic matrix $M$ such
  that
  \[
    u_n \geq 0 \iff u^T M^n v \geq 1/4
  \]
  for all $n$, where $u = (1, 0, \dots, 0)$ and $v = (0, 1, 0, \dots, 0)$.
  Equality holds on the left if and only if it holds on the right. The language
  $\lang(a^n) = u^T M^n v$ is recognizable by a DPA with input vector
  $u$, output vector $v$, and transition matrix $M$ (Remark~\ref{rem:initial}).
  If the LRS is rational, $M$ can be taken to be rational as well.
\end{theorem}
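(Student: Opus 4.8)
The plan is to reduce the statement to one linear-algebraic fact: any real square matrix with suitably bounded powers can be embedded, after an affine rescaling, into a stochastic matrix one of whose power-entries tracks a fixed bilinear form in the original. First I would put the LRS into linear representation. Since an order-$k$ LRS is exactly a unary weighted language, there are a companion matrix $C \in \R^{k \times k}$ (built from the recurrence coefficients) and vectors $\rho, \sigma$ with $u_n = \rho^T C^n \sigma$ for all $n$; this is the converse reading of Proposition~\ref{prop:pa-to-lrs}, and $C$ is rational when the LRS is. Next I would normalize for boundedness: choosing any $\lambda$ strictly above the spectral radius of $C$ and setting $B = C/\lambda$, the powers $B^n$ are bounded in norm, so $w_n := \varepsilon\,\rho^T B^n \sigma = \varepsilon\, u_n/\lambda^n$ is a bounded sequence, and a small $\varepsilon > 0$ makes $|w_n| \le 1/4$ for all $n$. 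As $\lambda^n$ and $\varepsilon$ are positive, $\operatorname{sign}(w_n) = \operatorname{sign}(u_n)$; in particular $u_n \ge 0 \iff w_n \ge 0$ and $u_n = 0 \iff w_n = 0$. A rational $\lambda$ above the spectral radius is available from a norm bound (e.g.\ the maximum row sum of $\lvert C\rvert$), so this step preserves rationality.

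The heart of the argument is to realize the affine quantity $\tfrac14 + w_n$ as a target-state probability $(M^n)_{1,2}$ of a genuinely stochastic matrix $M$. I would use a Turakainen-style sign-doubling embedding: split each coordinate into a $+$ and a $-$ copy and replace $B$ by the nonnegative matrix $\tilde B = \bigl(\begin{smallmatrix} B^+ & B^- \\ B^- & B^+ \end{smallmatrix}\bigr)$, where $B = B^+ - B^-$ with $B^\pm \ge 0$. On the ``difference'' coordinates $\tilde B$ acts exactly as the signed matrix $B$, so signed quantities such as $\rho^T B^n \sigma$ are recoverable from entries of $\tilde B^n$. After rescaling so that the row sums of $\tilde B$ are at most $1$, a single absorbing garbage state absorbs the deficit, yielding a stochastic matrix; prepending a fresh start state $1$ that in one step enters the appropriate initial distribution, and designating a single collector state $2$, produces $M$. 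The constant offset $\tfrac14$ is installed by routing part of the initial mass into a gadget whose stationary distribution places weight exactly $\tfrac14$ on state $2$, entered already at stationarity so that its contribution does not depend on $n$. The design goal is that $(M^n)_{1,2} = \tfrac14 + w_n$ for all $n \ge 1$.

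Granting this identity, the remaining pieces are routine. Since $u = e_1$ and $v = e_2$, we have $u^T M^n v = (M^n)_{1,2} = \tfrac14 + w_n$, and because $\operatorname{sign}(w_n) = \operatorname{sign}(u_n)$ this yields both the threshold equivalence $u_n \ge 0 \iff u^T M^n v \ge 1/4$ and the equality clause $u_n = 0 \iff u^T M^n v = 1/4$. Recognizability by a DPA is immediate from Definition~\ref{def:lang} with $T = \D$, output set $[0,1]$, and $o = \mathbb{E}$: the automaton starts in the distribution $u$ (legitimate by Remark~\ref{rem:initial}), applies the stochastic $M$ on each input letter, and outputs the expected value of $v$, i.e.\ the probability of occupying state $2$, so $\lang(a^n) = u^T M^n v$. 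All constructions above—$B^\pm$, the rescaling, the garbage and gadget transitions—stay within $\mathbb{Q}$ when $C$ and $\lambda$ are rational, so $M$ can be taken rational.

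The step I expect to be the main obstacle is making the affine offset exactly the constant $\tfrac14$, independently of $n$. The sign-doubling embedding inevitably produces a nonnegative ``drift'' term alongside the genuine signed term (the $\tfrac12\lvert B\rvert^n$ that accompanies $\tfrac12 B^n$ in the entries of $\tilde B^n$), and since probabilities can only be added, never subtracted, this drift cannot be cancelled by merely combining two routes into state $2$. Arranging the auxiliary structure so that the only $n$-dependent part of $(M^n)_{1,2}$ is $w_n$ itself—while keeping the baseline a true constant—is the delicate point, and it is exactly where the boundedness secured by the normalization step and the careful design of the stationary gadget must be brought together. Everything else reduces to bookkeeping.
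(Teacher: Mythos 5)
First, a framing remark: the paper does not prove this statement itself---it is imported verbatim from Akshay et al.---so the comparison here is against the construction in that reference. Your outer layers are fine: the companion-matrix realization $u_n = \rho^T C^n \sigma$, the normalization $B = C/\lambda$ with $\lambda$ rational and strictly above the spectral radius (which preserves signs, so both the threshold clause and the equality clause reduce to the sign of $w_n$), the rationality bookkeeping, and the reading of $\lang(a^n) = u^T M^n v$ as a DPA via Remark~\ref{rem:initial} are all exactly what is needed.

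The gap is the central embedding step, and you have diagnosed it accurately without closing it. The sign-doubling matrix $\tilde B = \bigl(\begin{smallmatrix} B^+ & B^- \\ B^- & B^+ \end{smallmatrix}\bigr)$ satisfies $\tilde B^n = \bigl(\begin{smallmatrix} P_n & Q_n \\ Q_n & P_n \end{smallmatrix}\bigr)$ with $P_n = \tfrac12(|B|^n + B^n)$ and $Q_n = \tfrac12(|B|^n - B^n)$, where $|B| = B^+ + B^-$. Every entry of a power of a stochastic matrix is a nonnegative path-sum, so any single entry $(M^n)_{1,2}$ built on top of this embedding carries the $n$-dependent drift $\rho^T |B|^n \sigma$ additively; recovering $\rho^T B^n \sigma$ alone requires the \emph{difference} $P_n - Q_n$, which is not a probability. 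No stationary gadget can repair this, because the drift is genuinely $n$-dependent, not a constant that could be absorbed into the $1/4$ baseline. The idea that closes the gap (Turakainen's, and the one behind the cited result) is different: border $B$ with two extra rows and columns, choosing the border entries so that the resulting $(k+2)\times(k+2)$ matrix $\hat B$ has \emph{all row sums and all column sums equal to zero} while its designated corner entry in the $n$-th power equals $\rho^T B^n \sigma$ (up to an index shift). Then $\hat B J = J \hat B = 0$ for the all-ones matrix $J$, so the binomial expansion collapses to $(\hat B + cJ)^n = \hat B^n + c^n (k+2)^{n-1} J$; taking $c$ large enough that $\hat B + cJ \ge 0$ and dividing by the constant row sum $c(k+2)$ yields a stochastic $M$ with designated entry $\rho^T B^n \sigma / (c(k+2))^n + 1/(k+2)$---an exactly constant offset plus a rescaled, still sign-faithful copy of $u_n$---after which a routine adjustment moves the constant to $1/4$. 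Without this commuting-border device, your construction does not go through.
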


We are now ready to separate NPAs and WFAs over a unary alphabet.

\begin{theorem}
	There is a language over a unary alphabet that is recognizable by an NPA but not by any WFA (and in particular any DPA).
\end{theorem}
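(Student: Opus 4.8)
The plan is to exhibit a concrete weighted language over a unary alphabet that an NPA can recognize but no WFA can. The natural candidate, by analogy with the binary construction, is a language built from the \emph{sign} of a linear recurrence sequence. Concretely, I would take an LRS $\langle u_i \rangle$ whose zero set is \emph{not} eventually periodic in a way a WFA could capture---more precisely, whose sign sequence encodes information forcing infinitely many linearly independent residual languages. First I would use Theorem~\ref{thm:lrs-to-pa} to obtain, from a chosen LRS, a DPA recognizing $\lang_0(a^n) = u^T M^n v$, which detects whether $u_n \geq 0$ via the threshold $1/4$. Then I would build an NPA that, using the $\minf$ (or $\maxf$) algebra, combines this DPA with a nondeterministic choice so that the accepted language becomes a genuinely non-LRS function of $n$---for instance a language that outputs a fixed high value exactly on the indices where $u_n \geq 0$ and a low value elsewhere, exploiting that $\minf$/$\maxf$ can perform a thresholding that a purely linear WFA cannot.

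The key steps, in order, are as follows. First, select an LRS $\langle u_i \rangle$ whose zero/sign set is combinatorially rich; by Skolem-Mahler-Lech (Theorem~\ref{thm:sml}) the zero set itself is a union of a finite set and finitely many arithmetic progressions, so I must instead use the sign pattern, which is \emph{not} so constrained and can fail to be the weight sequence of any WFA. Second, invoke Proposition~\ref{prop:pa-to-lrs}: if a WFA with $N$ states recognized our target language $\lang$, then $\langle \lang(a^i) \rangle$ would be an LRS of order at most $N$. Third, design the NPA so that its language $\lang$ takes only finitely many values (say $\{c, c'\}$) according to the sign of $u_n$, and argue that such a two-valued sequence, if it were an LRS, would have to be eventually periodic---but we can arrange the sign pattern of $\langle u_i \rangle$ to be non-eventually-periodic, yielding a contradiction. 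The NPA construction itself reuses the DPA from Theorem~\ref{thm:lrs-to-pa} as a subautomaton, wrapping it with nondeterminism whose $\minf$ or $\maxf$ evaluation realizes the desired thresholding behavior, analogous to how \eqref{eq:npa} realizes $2^{-n}$.

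The main obstacle will be the third step: showing that the target two-valued language is genuinely beyond WFAs. A finitely-valued sequence that is an LRS need not be eventually periodic in general, so I cannot simply appeal to ``two-valued LRS implies eventually periodic''; I must either choose the LRS carefully and argue via Skolem-Mahler-Lech that the \emph{level sets} of any purported LRS matching our sequence would have to be finite unions of arithmetic progressions, which contradicts the chosen non-periodic structure of the sign pattern, or argue linear-independence of residuals directly as in the binary case. The cleanest route is likely to pick $\langle u_i \rangle$ so that $\{n : u_n \geq 0\}$ is not a finite union of arithmetic progressions together with a finite set; then if the NPA's language were an LRS, its level set $\{n : \lang(a^n) = c\}$ would coincide with this set, and applying Skolem-Mahler-Lech to the LRS $\langle \lang(a^i) - c \rangle$ forces that level set to be a finite union of arithmetic progressions and a finite set, giving the contradiction. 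Establishing that such an LRS exists, and that the NPA genuinely computes the thresholded sign, are the two technical hearts of the argument.
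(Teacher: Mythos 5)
Your overall architecture matches the paper's: obtain a DPA for an LRS via Theorem~\ref{thm:lrs-to-pa}, wrap it in an NPA whose $\maxf$ (or $\minf$) semantics compares it against a constant, assume the result is a WFA language, and derive a contradiction from the Skolem--Mahler--Lech theorem. But there are two genuine gaps at exactly the points you flag as ``the technical hearts.'' First, the NPA cannot realize the two-valued thresholded language you propose (``a fixed high value exactly where $u_n \ge 0$, a low value elsewhere''): the $\maxf$ algebra applied to the nondeterministic choice between the DPA for $\langle y_n\rangle$ and the constant-$1/4$ DPA yields $w_n = \max(y_n, 1/4)$, which still varies continuously with $y_n$ above the threshold---it is a pointwise max of affine quantities, not an indicator function. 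Fortunately the two-valued language is not needed: if $\langle w_n\rangle$ were a WFA language it would be an LRS by Proposition~\ref{prop:pa-to-lrs}, hence so would $t_n = w_n - y_n$ (or $w_n - 1/4$), and the zero set of $\langle t_n\rangle$ is precisely the sign set $\{n : y_n > 1/4\}$; your level-set fallback is essentially this, with $c = 1/4$.

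Second, and more seriously, you never exhibit an LRS whose sign set violates the Skolem--Mahler--Lech structure, and this existence claim is the crux of the whole theorem. The paper's witness is $x_n = \mathrm{Re}(z^n)$ for $z = 3/5 + (4/5)i$, a rational point on the unit circle that is not a root of unity, giving an order-$2$ rational LRS with $x_n \ne 0$ for all $n$. The argument is then not merely that the sign set is ``non-eventually-periodic'': one shows it is infinite (since $\langle z^n\rangle$ is dense in the unit circle) yet contains \emph{no} infinite arithmetic progression, because along any progression $\{p + kn\}$ the subsequence $x_{p+kn} = \mathrm{Re}(z^p (z^k)^n)$ is again dense in $[-1,1]$ (as $z^k$ is not a root of unity) and so cannot remain positive. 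Without this density argument---or some substitute establishing that the sign set of a concrete LRS is not a finite set together with finitely many arithmetic progressions---the proof does not close. As written, your proposal is a correct plan whose decisive step is left unproved.
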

\begin{proof}
  We will work in the complex numbers $\mathbb{C}$, with $i$ being the positive
  square root of $-1$ as usual. Let $a, b \in \mathbb{Q}$ be nonzero such that
  $z \triangleq a + bi$ is on the unit circle in $\mathbb{C}$, for instance $a =
  3/5, b = 4/5$ so that $|a + bi| = a^2 + b^2 = 1$. Let $\bar{z} = a - bi$
  denote the complex conjugate of $z$ and let $\text{Re}(z)$ denote the real
  part of a complex number. It is possible to show that $z$ is not a root of
  unity, i.e., $z^k \neq 1$ for all $k \in \mathbb{N}$. Let $\langle x_n
  \rangle$ be the sequence $x_n \triangleq (z^n + \bar{z}^n)/2 =
  \text{Re}(z^n)$. By direct calculation, this sequence has imaginary part zero
  and satisfies the recurrence
  \[
    x_{n + 2} = 2 a x_{n + 1} - (a^2 + b^2) x_n
  \]
  with $x_0 = 1$ and $x_1 = a$, so $\langle x_n \rangle$ is an order-2 rational
  LRS. By Theorem~\ref{thm:lrs-to-pa}, there exists a stochastic matrix $M$ and
  non-negative vectors $u, v$ such that
  \[
    x_n \geq 0 \iff u^T M^n v \geq 1/4
  \]
  for all $n$, where equality holds on the left if and only if equality holds on
  the right.  Note that $x_n = \text{Re} (z^n) \neq 0$ since $z$ is not a root
  of unity (so in particular $z^n \neq \pm i$), hence equality never holds on
  the right.  Letting $\langle y_n \rangle$ be the sequence $y_n = u^T M^n v$,
  the (unary) language with weights $\langle y_n \rangle$ is recognized by the
  DPA with input $u$, output $v$ and transition matrix $M$.
  Furthermore, the constant sequence $\langle 1/4 \rangle$ is recognizable by a
  DPA.

  Now we define a sequence $\langle w_n \rangle$ with $w_n = \max(y_n, 1/4)$.
  Since $\langle y_n \rangle$ and $\langle 1/4 \rangle$ are recognizable by
  DPAs, $\langle w_n \rangle$ is recognizable by an NPA whose initial state
  nondeterministically chooses between the two DPAs (see
  Remark~\ref{rem:initial}).
  Suppose for the sake of contradiction that it is also recognizable by a
  WFA. Then $\langle w_n \rangle$ is an LRS (by
  Proposition~\ref{prop:pa-to-lrs}) and hence so is $\langle t_n \rangle$ with
  $t_n = w_n - y_n$. If we now consider the zero set
  \begin{align*}
	S &
		= \{ n \in \mathbb{N} \mid t_n = 0 \} \\
	&
		= \{ n \in \mathbb{N} \mid y_n > 1/4 \} &
		&
		\text{($y_n \neq 1/4$)} \\
	&
		= \{ n \in \mathbb{N} \mid x_n > 0 \} &
		&
		\text{(Theorem~\ref{thm:lrs-to-pa})} \\
	&
		= \{ n \in \mathbb{N} \mid \text{Re}(z^n) > 0 \} &
		&
		\text{(by definition)},
  \end{align*}
  Theorem~\ref{thm:sml} implies that $S$ is the union of a finite set of indices
  and along with a finite number of arithmetic progressions. Note that $S$
  cannot be finite---in the last line, $z^n$ is dense in the unit circle since
  $z$ is not a root of unity---so there must be at least one arithmetic
  progression $\{ p + kn \mid n \in \mathbb{N} \}$. Letting $\langle r_n
  \rangle$ be
  \[
    r_n = (z^p \cdot (z^k)^n + \bar{z}^p \cdot (\bar{z}^k)^n)/2
    = \text{Re}(z^p \cdot (z^k)^n)
    = x_{p + kn} ,
  \]
  we have $p + kn \in S$, so $r_n > 0$ for all $n \in \mathbb{N}$, but this is
  impossible since it is dense in $[-1, 1]$ (because $z^k$ is not a root of
  unity for $k \neq 0$, so $z^p \cdot (z^k)^n$ is dense in the unit circle).

  Hence, the unary weighted language $\langle w_n \rangle$ can be recognized by
  an NPA but not by a WFA.
  \qed
\end{proof}

\section{Checking Language Equivalence of NPAs} \label{sec:equiv}

Given the coalgebraic NPA model, a natural question is whether there is a procedure to check language equivalence of NPAs.
We will show that language equivalence of NPAs is undecidable by reduction from the \emph{threshold problem} on DPAs.
Nevertheless, we can define a metric on the set of languages recognized by NPAs to measure their similarity.
While this metric cannot be computed exactly, it can be approximated to any given precision in finite time.

\subsection{Undecidability and Hardness}

\begin{theorem} \label{thm:equiv-hard}
	Equivalence of NPAs is undecidable when $|A| \ge 2$ and the $\Psc\D$-algebra on $[0, 1]$ extends the usual $\D$-algebra on $[0, 1]$.
\end{theorem}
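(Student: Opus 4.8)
The plan is to reduce from the threshold problem on DPAs over a two-letter alphabet, which Blondel and Canterini~\cite{blondel2003} showed to be undecidable: given a DPA computing a weighted language $\lang \colon A^* \to [0,1]$ together with a rational threshold $\lambda \in [0,1]$, decide whether there exists a word $u \in A^*$ with $\lang(u) > \lambda$. By Corollary~\ref{cor:minmax}, the hypothesis that the $\Psc\D$-algebra on $[0,1]$ extends $\mathbb{E}$ forces it to be either $\minf$ or $\maxf$. By Proposition~\ref{prop:relation}, language equivalence of two $\maxf$-NPAs holds exactly when language equivalence of their complemented $\minf$-counterparts does (since $1 - \lang_{\mathcal{A}} = 1 - \lang_{\mathcal{B}}$ iff $\lang_{\mathcal{A}} = \lang_{\mathcal{B}}$), so it suffices to establish undecidability for the $\maxf$ semantics.

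Given an instance $(\lang, \lambda)$, I first build two NPAs over the same alphabet $A$. The constant language $u \mapsto \lambda$ is recognized by a one-state DPA $C_\lambda$ with output $\lambda$ and self-loops on every letter; for rational $\lambda$ this is a legitimate DPA, hence an NPA with singleton transitions. Using the nondeterministic-choice construction already employed in the separation proof and justified by Remark~\ref{rem:initial}, I form an NPA $\mathcal{A}$ whose fresh initial state nondeterministically selects between running the given DPA for $\lang$ and running $C_\lambda$, with each branch thereafter evolving as the corresponding DPA. Under the $\maxf$ algebra this yields $\lang_{\mathcal{A}}(u) = \max(\lang(u), \lambda)$, because the semantics takes the maximum over the two branches at every step and each branch is singleton-valued. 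Let $\mathcal{B} = C_\lambda$ be the constant-$\lambda$ NPA.

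The crux of the reduction is then the chain of equivalences: $\lang_{\mathcal{A}} = \lang_{\mathcal{B}}$ iff $\max(\lang(u), \lambda) = \lambda$ for all $u$, iff $\lang(u) \le \lambda$ for all $u$, i.e., iff no word exceeds the threshold. A decision procedure for NPA language equivalence would therefore decide the complement of the threshold problem; since decidability is closed under complement and the threshold problem is undecidable, so is its complement, giving a contradiction. The $\minf$ case follows by Proposition~\ref{prop:relation}, and since the construction stays over the two-letter Blondel-Canterini alphabet (extra letters, if present, may be handled by self-loops), the bound $|A| \ge 2$ is respected.

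The main obstacle I anticipate is aligning the strict-versus-non-strict inequality: the reduction produces exactly the condition ``$\lang(u) \le \lambda$ for all $u$,'' so I must invoke the version of the threshold problem phrased with a \emph{strict} inequality ($\exists u.\ \lang(u) > \lambda$), confirming that this is the form established to be undecidable rather than the $\ge$ variant. A secondary point requiring care is verifying that the nondeterministic union of the two DPAs genuinely realizes the pointwise maximum under $\maxf$---that the outer nondeterministic choice at the initial state commutes with the deterministic evolution of each branch---which follows by unfolding Definition~\ref{def:lang} on $\mathcal{A}$ but should be checked explicitly.
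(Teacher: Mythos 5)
Your proposal is correct and follows essentially the same route as the paper: reduce from the Blondel--Canterini threshold problem by forming an NPA that nondeterministically chooses between the given DPA and a constant-$\lambda$ automaton, compare it against the constant automaton, and observe that equivalence holds iff the threshold is never exceeded. The only cosmetic differences are that the paper places the nondeterministic branch after the first letter rather than at the initial state and handles the $\minf$ case directly via the $\geq$-threshold problem instead of via Proposition~\ref{prop:relation}; both of the points you flag for verification (the strict-inequality form of the threshold problem, and that the branch-wise choice realizes the pointwise maximum) do check out.
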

\begin{proof}
	Let $X$ be a DPA and $\kappa \in [0, 1]$.
	We define NPAs $Y$ and $Z$ as follows:
	\begin{align*}
		Y = \begin{gathered}
			\begin{automaton}
				\node[initial above,state] (q0) {$\kappa$};
				\node[state] (q1) [left of=q0] {$\kappa$};
				\node[state,ellipse,dashed] (q2) [right of=q0] {\hspace{.3cm}$X$\hspace{.3cm}\mbox{}};
				\path[->]
				(q0) edge [swap] node {$A$} (q1)
				(q0) edge [pos=.4] node {$A$} (q2)
				(q1) edge [loop above] node {$A$} ();
			\end{automaton}
		\end{gathered} &
			&
			Z = \begin{gathered}
				\begin{automaton}
					\node[initial,state] (q0) {$\kappa$};
					\path[->]
					(q0) edge [loop right] node {$A$} ();
				\end{automaton}
			\end{gathered}
	\end{align*}
	Here the node labeled $X$ represents a copy of the automaton $X$---the transition into $X$ goes into the initial state of $X$.
	Note that the edges are labeled by $A$ to indicate a transition for every element of $A$.
	We see that $\lang_Y(\eword) = \kappa = \lang_Z(\eword)$ and (for $\alpha$ either $\minf$ or $\maxf$, as follows from Corollary~\ref{cor:minmax})
	\begin{align*}
		\lang_Y(av) &
			= (\alpha \circ \Conv)(\{\kappa, \lang_X(v)\}) &
			\lang_Z(av) &
			= \kappa.
	\end{align*}
	Thus, if $\alpha = \minf$, then $\lang_Y = \lang_Z$ if and only if $\lang_X(v) \ge \kappa$ for all $v \in A^*$; if $\alpha = \maxf$, then $\lang_Y = \lang_Z$ if and only if $\lang_X(v) \le \kappa$ for all $v \in A^*$.
	Both of these threshold problems were shown to be undecidable, for alphabets of size at least 2, by Blondel and Canterini~\cite[Theorem~2.1]{blondel2003}.
	\qed
\end{proof}

The situation for automata over unary alphabets is more subtle; in particular, the threshold
problem is not known to be undecidable in this case. However, there is a reduction to a
long-standing open problem on LRSs.

Given an LRS $\langle u_i \rangle$, the \emph{Positivity}
problem is to decide whether $u_i$ is non-negative for all $i \in \mathbb{N}$
(see, e.g.,~\cite{ouaknine2014positivity}). While the decidability of this
problem has remained open for more than 80 years, it is known that a decision
procedure for Positivity would necessarily entail breakthroughs in open problems
in number theory. That is, it would give an algorithm to compute the
\emph{homogeneous Diophantine approximation type} for a class of transcendental
numbers~\cite{ouaknine2014positivity}. Furthermore, the Positivity problem can
be reduced to the threshold problem on unary probabilistic automata.
Putting everything together, we have the following reduction.

\begin{corollary}
  The Positivity problem for linear recurrence sequences can be reduced to the
  equivalence problem of NPAs over a unary alphabet.
\end{corollary}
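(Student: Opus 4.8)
The plan is to compose two reductions that are already available: the reduction from Positivity to a unary threshold problem furnished by Theorem~\ref{thm:lrs-to-pa}, and the reduction from a threshold problem to NPA equivalence given by the $Y$/$Z$ construction in the proof of Theorem~\ref{thm:equiv-hard}. The key observation is that the latter construction never uses $|A| \ge 2$---only the appeal to Blondel and Canterini does---so the gadget specializes verbatim to the unary alphabet $A = \{a\}$.

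First I would take an arbitrary Positivity instance, i.e.\ an LRS $\langle u_i \rangle$, and apply Theorem~\ref{thm:lrs-to-pa} to obtain a stochastic matrix $M$ together with $u = (1, 0, \dots, 0)$ and $v = (0, 1, 0, \dots, 0)$. By Remark~\ref{rem:initial} these data define a DPA $X$ over $A = \{a\}$ recognizing $\lang_X(a^n) = u^T M^n v$, and Theorem~\ref{thm:lrs-to-pa} guarantees $u_n \ge 0 \iff \lang_X(a^n) \ge 1/4$ for every $n$. Hence $\langle u_i \rangle$ is positive if and only if $\lang_X(a^n) \ge 1/4$ for all $n$, which is precisely the statement that $\lang_X(w) \ge \kappa$ for all $w \in A^*$ at threshold $\kappa = 1/4$.

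Next I would plug this $X$ and $\kappa = 1/4$ into the $Y$/$Z$ construction of Theorem~\ref{thm:equiv-hard} under the $\minf$ semantics. That construction produces NPAs $Y$ and $Z$ with $\lang_Y = \lang_Z$ if and only if $\lang_X(w) \ge \kappa$ for all $w \in A^*$; since the defining computation $\lang_Y(av) = (\minf \circ \Conv)(\{\kappa, \lang_X(v)\})$ is independent of the alphabet, it applies unchanged when $A$ is unary. Chaining the two equivalences yields $\langle u_i \rangle$ positive $\iff \lang_Y = \lang_Z$, so a decision procedure for equivalence of unary NPAs would decide Positivity; and since $M$, and therefore $Y$ and $Z$, is computable from $\langle u_i \rangle$ (rational whenever the LRS is), the reduction is effective.

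There is no deep obstacle here---the work is in assembling existing pieces---but the step demanding care is the alignment of the two thresholds: the constant $1/4$ delivered by Theorem~\ref{thm:lrs-to-pa} must be reused as the output weight $\kappa$ in the $Y$/$Z$ gadget, so that the nonstrict threshold condition $\lang_X(a^n) \ge 1/4$ produced by the $\minf$ semantics coincides exactly with $u_n \ge 0$. Once these constants are matched, the corollary follows immediately from the two reductions.
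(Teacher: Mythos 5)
Your proof is correct and follows essentially the same route as the paper: compose Theorem~\ref{thm:lrs-to-pa} with the alphabet-independent $Y$/$Z$ gadget from Theorem~\ref{thm:equiv-hard}. The only cosmetic difference is that you feed $\langle u_i \rangle$ directly into the $\minf$/$\ge$-threshold side, whereas the paper negates the LRS, goes through the $\maxf$/$\le$-threshold side, and then invokes Proposition~\ref{prop:relation} to recover the $\minf$ case---the two are dual and equally valid.
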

\begin{proof}
  The construction in Theorem~\ref{thm:equiv-hard} shows that the lesser-than
  threshold problem can be reduced to the equivalence problem for NPAs with
	$\maxf$ semantics, so we show that Positivity can be reduced to
  the lesser-than threshold problem on probabilistic automata with a unary alphabet. Given any rational LRS $\langle u_i \rangle$, clearly
  $\langle - u_i \rangle$ is an LRS as well, so by Theorem~\ref{thm:lrs-to-pa}
  there exists a rational stochastic matrix $M$ such that
  \[
    - u_n > 0 \iff u^T M^n v > 1/4
  \]
  for all $n$, where $u = (1, 0, \dots, 0)$ and $v = (0, 1, 0, \dots, 0)$.
  Taking $M$ to be the transition matrix, $v$ to be the input vector, and $u$ to
  be the output vector, the probabilistic automaton corresponding to the
  right-hand side is a nonsatisfying instance to the threshold problem with
  threshold $\leq 1/4$ if and only if the $\langle u_i \rangle$ is a satisfying
  instance of the Positivity problem.
  
  Applying Proposition~\ref{prop:relation} yields an analogous reduction from
  Positivity to the equivalence problem of NPAs with $\minf$ semantics.
\end{proof}

\subsection{Checking Approximate Equivalence}

The previous negative results show that deciding exact equivalence of
NPAs is computationally intractable (or at least very difficult, for a unary
alphabet). A natural question is whether we might be able to check approximate
equivalence. In this section, we show how to approximate a metric on
weighted languages. Our metric will be \emph{discounted}---differences in
weights of longer words will contribute less to the metric than differences in
weights of shorter words.

Given $c \in [0, 1)$ and two weighted languages $l_1, l_2 \colon A^* \to [0,
1]$, we define
\[
	d_c(l_1, l_2) = \sum_{u \in A^*}|l_1(u) - l_2(u)| \cdot \left(\frac{c}{|A|}\right)^{|u|} .
\]
Suppose that $l_1$ and $l_2$ are recognized by given NPAs. Since $d_c(l_1,
l_2) = 0$ if and only if the languages (and automata) are equivalent, we cannot
hope to compute the metric exactly. We can, however, compute the weight of any
finite word under $l_1$ and $l_2$. Combined with the discounting in the metric,
we can approximate this metric $d_c$ within any desired (nonzero) error.

\begin{theorem}
  There is a procedure that given $c \in [0, 1)$, $\kappa > 0$, and computable
  functions $l_1, l_2 \colon A^* \to [0, 1]$ outputs $x \in \R_+$ such that
  $|d_c(l_1, l_2) - x| \le \kappa$.
\end{theorem}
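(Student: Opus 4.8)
The plan is to exploit the discounting factor to truncate the infinite sum at a finite length and bound the tail. First I would observe that the metric decomposes by word length: writing
\[
	d_c(l_1, l_2) = \sum_{k = 0}^\infty \sum_{|u| = k} |l_1(u) - l_2(u)| \cdot \left(\frac{c}{|A|}\right)^{k},
\]
each length-$k$ block contains exactly $|A|^k$ words, and each summand $|l_1(u) - l_2(u)|$ lies in $[0, 1]$ since both languages map into $[0, 1]$. Hence the contribution of the block of length $k$ is bounded above by $|A|^k \cdot (c / |A|)^k = c^k$. The tail from length $N$ onward is therefore bounded by $\sum_{k = N}^\infty c^k = c^N / (1 - c)$, which is a convergent geometric series precisely because $c < 1$. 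This also establishes that $d_c$ is well-defined (finite).

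Next I would use this tail bound to choose a cutoff. Given the target error $\kappa > 0$, pick $N$ large enough that $c^N / (1 - c) \le \kappa$; since $c \in [0, 1)$, such an $N$ can be computed effectively (e.g.\ $N = \lceil \log(\kappa(1 - c)) / \log c \rceil$ when $c > 0$, and $N = 1$ when $c = 0$, where the sum is finite anyway). The procedure then computes
\[
	x = \sum_{k = 0}^{N - 1} \sum_{|u| = k} |l_1(u) - l_2(u)| \cdot \left(\frac{c}{|A|}\right)^{k},
\]
which is a finite sum over the finitely many words of length less than $N$. Each term is computable because $l_1$ and $l_2$ are assumed computable functions, $|A|$ is finite, and the arithmetic operations on the resulting rationals (or reals to sufficient precision) are effective. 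The output $x$ differs from $d_c(l_1, l_2)$ by exactly the truncated tail, which we bounded by $\kappa$, giving $|d_c(l_1, l_2) - x| \le \kappa$ as required.

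The only genuine subtlety, and the step I would treat most carefully, concerns the meaning of \emph{computable} for real-valued $l_1, l_2$. If the outputs are exact rationals the finite sum is computed exactly and the argument is immediate; if instead $l_i(u)$ are real numbers presented by approximation schemes, I would compute each $|l_1(u) - l_2(u)|$ to enough precision that the accumulated numerical error over the finitely many length-${<}N$ terms stays below a second budget, then split $\kappa$ between the tail truncation and this finite-sum rounding (say $\kappa/2$ each). Because only finitely many words are involved once $N$ is fixed, this refinement is routine bookkeeping rather than a real obstacle; the heart of the result is simply that the geometric discounting makes the tail uniformly controllable independently of the specific languages.
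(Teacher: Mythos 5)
Your proposal is correct and follows essentially the same route as the paper's proof: truncate the sum at a cutoff length $N$ with $c^N/(1-c) \le \kappa$ (the paper takes $n = \lceil\log_c((1-c)\kappa)\rceil$, which is the same choice), compute the finite partial sum, and bound the discarded tail by the geometric series $\sum_{k \ge N} c^k$. Your extra remarks on the $c = 0$ edge case and on splitting the error budget when $l_i(u)$ are only given by approximation schemes are sensible refinements the paper elides, but they do not change the argument.
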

\begin{proof}
	Let $n = \lceil\log_c((1 - c) \cdot \kappa)\rceil \in \N$ and define
	\[
		x = \sum_{u \in A^*, |u| < n}|l_1(u) - l_2(u)| \cdot \left(\frac{c}{|A|}\right)^{|u|}.
	\]
  This sum is over a finite set of finite strings and the weights of $l_1(u)$ and
  $l_2(u)$ can all be computed exactly, so $x$ is computable as well. Now we can
  bound
	\begin{align*}
		|d_c(l_1, l_2) - x| &
      = \sum_{u \in A^*, |u| \ge n}|l_1(u) - l_2(u)| \cdot \left(\frac{c}{|A|}\right)^{|u|} \\
		&
			\le \sum_{u \in A^*, |u| \ge n}\left(\frac{c}{|A|}\right)^{|u|} \\
		&
			= \sum_{i \in \N, i \ge n}|A|^{i}\cdot \left(\frac{c}{|A|}\right)^{i} \\
		&
			= \sum_{i \in \N, i \ge n}c^{i} \\
		&
			= \frac{c^n}{1 - c} \le \kappa,
	\end{align*}
  where the last step is because $n \ge \log_c((1 - c) \cdot \kappa)$, and thus
  $c^n \le (1 - c) \cdot \kappa$, noting that $c \in [0, 1)$ and $\kappa > 0$.
	\qed
\end{proof}

We leave approximating other metrics on weighted languages---especially
nondiscounted metrics---as an intriguing open question.

\section{Conclusions} \label{sec:conclusion}
We have defined a novel probabilistic language semantics for nondeterministic
probabilistic automata (NPAs). We proved that NPAs are strictly more expressive
than deterministic probabilistic automata, and that exact equivalence is
undecidable. We have shown how to approximate the equivalence question to
arbitrary precision using a discounted metric.  There are two directions for
future work that we would like to explore. First, it would be interesting to see
if different metrics can be defined on probabilistic languages and what
approximate equivalence procedures they give rise to. Second, we would like to
explore whether we can extend logical characterization results in the style of
Panangaden et
al.~\cite{DBLP:conf/icalp/FijalkowKP17,DBLP:conf/lics/DesharnaisEP98}.  Finally,
it would be interesting to investigate the class of languages recognizable by
our NPAs.

\paragraph{Related Work.}
There are many papers studying probabilitic automata and variants thereof. The
work in our paper is closest to the work of Segala~\cite{segala-thesis} in that
our automaton model has both nondeterminism and probabilistic choice. However,
we enrich the states with an output weight that is used in the definition of the
language semantics. Our language semantics is coarser than probabilistic
(convex) bisimilarity~\cite{segala-thesis} and bisimilarity on
distributions~\cite{DBLP:conf/concur/HermannsKK14}. In fact, in contrast to the
hardness and undecidability results we proved for probabilistic language
equivalence, bisimilarity on distributions can be shown to be
decidable~\cite{DBLP:conf/concur/HermannsKK14} with the help of convexity. The
techniques we use in defining the semantics are closely related to the recent
categorical understanding of bisimilarity on
distributions~\cite{DBLP:conf/concur/Bonchi0S17}.

\bibliographystyle{plainurl}
\bibliography{header,main}

\end{document}